\newtheorem{definition}{Definition}
\newtheorem{proposition}[definition]{Proposition}
\newtheorem{lemma}[definition]{Lemma}
\newtheorem{theorem}[definition]{Theorem}
\newtheorem{corollary}[definition]{Corollary}
\newtheorem{conjecture}[definition]{Conjecture}
\newtheorem{remark}[definition]{Remark}
\newtheorem{example}[definition]{Example}
\newtheorem{question}[definition]{Question}
\newtheorem{memo}[definition]{Memo}
\def\squareforqed{\hbox{\rlap{$\sqcap$}$\sqcup$}}
\def\qed{\ifmmode\squareforqed\else{\unskip\nobreak\hfil
		\penalty50\hskip1em\null\nobreak\hfil\squareforqed
		\parfillskip=0pt\finalhyphendemerits=0\endgraf}\fi}
\def\endenv{\ifmmode\;\else{\unskip\nobreak\hfil
		\penalty50\hskip1em\null\nobreak\hfil\;
		\parfillskip=0pt\finalhyphendemerits=0\endgraf}\fi}
\newenvironment{proof}{\noindent \textbf{{Proof.~} }}{\qed}
\def\Dbar{\leavevmode\lower.6ex\hbox to 0pt
	{\hskip-.23ex\accent"16\hss}D}
\def\url@leostyle{%
	\@ifundefined{selectfont}{\def\UrlFont{\sf}}{\def\UrlFont{\small\ttfamily}}}
\def\bcj{\begin{conjecture}}
	\def\ecj{\end{conjecture}}
\def\bcr{\begin{corollary}}
	\def\ecr{\end{corollary}}
\def\bd{\begin{definition}}
	\def\ed{\end{definition}}
\def\bea{\begin{eqnarray}}
	\def\eea{\end{eqnarray}}
\def\bem{\begin{enumerate}}
	\def\eem{\end{enumerate}}
\def\bex{\begin{example}}
	\def\eex{\end{example}}
\def\bim{\begin{itemize}}
	\def\eim{\end{itemize}}
\def\bl{\begin{lemma}}
	\def\el{\end{lemma}}
\def\bma{\begin{bmatrix}}
	\def\ema{\end{bmatrix}}
\def\bpf{\begin{proof}}
	\def\epf{\end{proof}}
\def\bpp{\begin{proposition}}
	\def\epp{\end{proposition}}
\def\bqu{\begin{question}}
	\def\equ{\end{question}}
\def\br{\begin{remark}}
	\def\er{\end{remark}}
\def\bt{\begin{theorem}}
	\def\et{\end{theorem}}
\def\bmm{\begin{memo}}
	\def\emm{\end{memo}}
\def\btb{\begin{tabular}}
	\def\etb{\end{tabular}}
	\newcommand{\nc}{\newcommand}
	\def\a{\alpha}
	\def\b{\beta}
	\def\l{\lambda}
	\def\ps{\psi}
	\def\O{\Omega}
	\nc{\bbA}{\mathbb{A}} \nc{\bbB}{\mathbb{B}} \nc{\bbC}{\mathbb{C}}
	\nc{\bbD}{\mathbb{D}} \nc{\bbE}{\mathbb{E}} \nc{\bbF}{\mathbb{F}}
	\nc{\bbG}{\mathbb{G}} \nc{\bbH}{\mathbb{H}} \nc{\bbI}{\mathbb{I}}
	\nc{\bbJ}{\mathbb{J}} \nc{\bbK}{\mathbb{K}} \nc{\bbL}{\mathbb{L}}
	\nc{\bbM}{\mathbb{M}} \nc{\bbN}{\mathbb{N}} \nc{\bbO}{\mathbb{O}}
	\nc{\bbP}{\mathbb{P}} \nc{\bbQ}{\mathbb{Q}} \nc{\bbR}{\mathbb{R}}
	\nc{\bbS}{\mathbb{S}} \nc{\bbT}{\mathbb{T}} \nc{\bbU}{\mathbb{U}}
	\nc{\bbV}{\mathbb{V}} \nc{\bbW}{\mathbb{W}} \nc{\bbX}{\mathbb{X}}
	\nc{\bbZ}{\mathbb{Z}}
	\nc{\bA}{{\bf A}} \nc{\bB}{{\bf B}} \nc{\bC}{{\bf C}}
	\nc{\bD}{{\bf D}} \nc{\bE}{{\bf E}} \nc{\bF}{{\bf F}}
	\nc{\bG}{{\bf G}} \nc{\bH}{{\bf H}} \nc{\bI}{{\bf I}}
	\nc{\bJ}{{\bf J}} \nc{\bK}{{\bf K}} \nc{\bL}{{\bf L}}
	\nc{\bM}{{\bf M}} \nc{\bN}{{\bf N}} \nc{\bO}{{\bf O}}
	\nc{\bP}{{\bf P}} \nc{\bQ}{{\bf Q}} \nc{\bR}{{\bf R}}
	\nc{\bS}{{\bf S}} \nc{\bT}{{\bf T}} \nc{\bU}{{\bf U}}
	\nc{\bV}{{\bf V}} \nc{\bW}{{\bf W}} \nc{\bX}{{\bf X}}
	\nc{\bZ}{{\bf Z}}
	\nc{\cA}{{\cal A}} \nc{\cB}{{\cal B}} \nc{\cC}{{\cal C}}
	\nc{\cD}{{\cal D}} \nc{\cE}{{\cal E}} \nc{\cF}{{\cal F}}
	\nc{\cG}{{\cal G}} \nc{\cH}{{\cal H}} \nc{\cI}{{\cal I}}
	\nc{\cJ}{{\cal J}} \nc{\cK}{{\cal K}} \nc{\cL}{{\cal L}}
	\nc{\cM}{{\cal M}} \nc{\cN}{{\cal N}} \nc{\cO}{{\cal O}}
	\nc{\cP}{{\cal P}} \nc{\cQ}{{\cal Q}} \nc{\cR}{{\cal R}}
	\nc{\cS}{{\cal S}} \nc{\cT}{{\cal T}} \nc{\cU}{{\cal U}}
	\nc{\cV}{{\cal V}} \nc{\cW}{{\cal W}} \nc{\cX}{{\cal X}}
	\nc{\cZ}{{\cal Z}}
	\nc{\hA}{{\hat{A}}} \nc{\hB}{{\hat{B}}} \nc{\hC}{{\hat{C}}}
	\nc{\hD}{{\hat{D}}} \nc{\hE}{{\hat{E}}} \nc{\hF}{{\hat{F}}}
	\nc{\hG}{{\hat{G}}} \nc{\hH}{{\hat{H}}} \nc{\hI}{{\hat{I}}}
	\nc{\hJ}{{\hat{J}}} \nc{\hK}{{\hat{K}}} \nc{\hL}{{\hat{L}}}
	\nc{\hM}{{\hat{M}}} \nc{\hN}{{\hat{N}}} \nc{\hO}{{\hat{O}}}
	\nc{\hP}{{\hat{P}}} \nc{\hR}{{\hat{R}}} \nc{\hS}{{\hat{S}}}
	\nc{\hT}{{\hat{T}}} \nc{\hU}{{\hat{U}}} \nc{\hV}{{\hat{V}}}
	\nc{\hW}{{\hat{W}}} \nc{\hX}{{\hat{X}}} \nc{\hZ}{{\hat{Z}}}
	\nc{\hn}{{\hat{n}}}
	\def\diag{\mathop{\rm diag}}
	\def\tr{\mathop{\rm Tr}}
	\def\SO{{\mbox{\rm SO}}}
	\def\Ort{{\mbox{\rm O}}}
	\def\Un{{\mbox{\rm U}}}
	\newcommand{\bra}[1]{\langle#1|}
	\newcommand{\ket}[1]{|#1\rangle}
	\def\Dbar{\leavevmode\lower.6ex\hbox to 0pt
		{\hskip-.23ex\accent"16\hss}D}
\begin{document}
	
\Large

\title{Mutually-orthogonal unitary and orthogonal matrices}

\author{Zhiwei Song}\email[]{zhiweisong@buaa.edu.cn}
\affiliation{School of Mathematical Sciences, Beihang University, Beijing 100191, China}

\author{Lin Chen}\email[]{linchen@buaa.edu.cn (corresponding author)}
\affiliation{School of Mathematical Sciences, Beihang University, Beijing 100191, China}
\affiliation{International Research Institute for Multidisciplinary Science, Beihang University, Beijing 100191, China}
		
\author{Saiqi Liu}\email[]{liu\_saiqi\_sy2324111@buaa.edu.cn (corresponding author)}
\affiliation{School of Mathematical Sciences, Beihang University, Beijing 100191, China}

\begin{abstract}
We introduce the concept of $n$-OU and $n$-OO matrix sets, a collection of $n$ mutually-orthogonal unitary and real orthogonal matrices under Hilbert-Schmidt inner product. We give a detailed characterization of order-three $n$-OO matrix sets under orthogonal equivalence. As an application in quantum information theory, we show that the minimum and maximum numbers of an unextendible maximally entangled bases within a real two-qutrit system are three and four, respectively. Further, we propose a new matrix decomposition approach, defining an $n$-OU (resp. $n$-OO) decomposition for a matrix as a linear combination of $n$ matrices from an $n$-OU (resp. $n$-OO) matrix set. We show that any order-$d$ matrix has a $d$-OU decomposition. As a contrast, we provide criteria for an order-three real matrix to possess an $n$-OO decomposition.
			\end{abstract}
		
		\maketitle

		\section{introduction}
		\label{sec:int}

	 The orthogonality between matrices (or operators) is a fundamental research topic in mathematics and physics. The utilization of an orthogonal basis of matrices to express finite-dimensional matrices is a commonly employed technique in quantum error-correcting codes \cite{1999A}, and the Bloch representation of density operators \cite{Fano1957Description}.	
Some orthogonal bases of the Hilbert-Schmidt spaces have been collected and applied \cite{siewert2022orthogonal}.  
On the other hand,	unitary and real orthogonal matrices play a fundamental role in various areas. 	The Weyl-Heisenberg group implies that there exist $d^2$ order-$d$ mutually-orthogonal unitary operators \cite{weyl1950theory}. As far as we know, the question of determining the number of mutually-orthogonal real orthogonal operators remains open. In this paper, we investigate this problem by  considering a set of $n$ real orthogonal matrices that are mutually-orthogonal. We refer to it as an $n$-OO matrix set. We give a full characterization of order-three $n$-OO matrix sets in Theorems \ref{2ge}-\ref{unt}.

We then apply our results to studying unextendible maximally entangled bases (UMEB) in quantum information theory. This notion refers to a set of incomplete orthonormal maximally entangled states whose complementary space has no maximally entangled states  \cite{bravyi2011unextendible}. It can be used to find quantum channels that are unital but not convex mixtures of unitary operations.
The authors have proved that UMEB do not exist in $\bbC^2 \otimes \bbC^2$, and they constructed a six-number UMEB in $\bbC^3 \otimes \bbC^3$ and twelve-number UMEB in $\bbC^4 \otimes \bbC^4$ \cite{bravyi2011unextendible}.  Later, UMEB in arbitrary bipartite spaces has been investigated \cite{chen2013unextendible,shi2019constructions, li2014unextendible,wang2014unextendible,guo2016constructing}.  For example, there exists a $d^2$-member UMEB in $\bbC^d\otimes \bbC^{d'}$ ($\frac{d'}{2}<d<d'$)\cite{chen2013unextendible}.
However, characterizing UMEB in real Hilbert space remains a problem. It is also an interesting question to give the minimum and
maximum numbers of UMEB as a function of
dimension.	In this article, we present the detailed forms of UMEB in two-qutrit real Hilbert space in Theorem \ref{ru}. We show that its minimum and maximum numbers are three and four, respectively.

Another interesting topic is the additive decomposition of matrices in terms of unitary and real orthogonal matrices \cite{olsen1986convex,wu1994additive,zhan2001span,li2002additive,bottcher2012orthogonal}. In quantum physics, such a decomposition is related to the  mixed-unitary channels \cite{girard2022mixed}, the Kraus operators describing an open system \cite{cui2012optimal,suri2023two}. It is well known that every matrix can be written as the average of two unitary matrices \cite{wu1994additive,zhan2001span}. For the real case, every  matrix can be written as the linear combination of no more than four orthogonal matrices \cite{li2002additive}. Research has also explored the singular value structure of a matrix under such decomposition \cite{li2022singular}.		
However,  there appears to be no prior research whether a matrix can be written as the linear combination of $n$ mutually-orthogonal unitary or real orthogonal matrices. In this paper, we introduce and define such a decomposition as $n$-OU and $n$-OO decomposition, respectively. We show that all complex matrices possess an $n$-OU decomposition in Theorem \ref{le:nxnUnitary}. In contrast, we provide conditions for an order-three real matrix to have an $n$-OO decomposition, as outlined in Theorem \ref{condition}. 

	The rest of this paper is organized as follows. In Sec. \ref{sec:pre} we introduce the basic facts for this paper. In Sec. \ref{sec:nOO+nOU} we introduce the $n$-OO matrix and $n$-OU matrix sets. Then we introduce the $n$-OU and $n$-OO decomposition in Sec. \ref{sec:ooude}. We further apply our results in quantum information theory in Sec. \ref{sec:app}. Finally we conclude in Sec. \ref{sec:pro}.

	\section{preliminaries}
\label{sec:pre}
In this section, we introduce the main notions and theories used in this paper. 
We denote $M^\dagger$ as the 
conjugate transpose of $M$, and $I_n$ as the order-$n$ identity matrix. Next we respectively denote $\Un(n)$ and $\Ort(n)$ as the unitary and real orthogonal groups of order-$n$ matrices. The matrices in $\Ort(n)$ with 
determinant 1 form the special orthogonal group $\SO(n)$. 
Viewing matrices as elements in a space of linear bounded operators endowed with Hilbert-Schmidt inner product,
we refer that two complex (resp. real) matrices $M$ and $N$ are orthogonal if $\tr(M^\dagger N)=0$ (resp. $\tr(M^T N)=0$). Now we are ready to present the main definitions of this section.

\begin{definition}
	\label{df}
	(i)	Suppose a matrix set has $n$ elements $U_1,\cdots,U_n$, which are all unitary matrices. They are mutually orthogonal, i.e., $\tr(U_i^\dagger U_j)=0$ for any $i\neq j$. We refer to the set as an $n$-OU matrix set.
	
	(ii) Suppose a matrix set has $n$ elements $O_1,\cdots,O_n$, which are all real orthogonal matrices. They are mutually orthogonal, i.e., $\tr(O_i^TO_j)=0$ for any $i\neq j$. We refer to the set as an $n$-OO matrix set.
\end{definition}

It is clear that if $\{M_1,\cdots,M_n\}$ is an $n$-OU (resp. $n$-OO) matrix set, then $\{(-1)^{a_i}M_1,\cdots,(-1)^{a_n}M_n\}$ is also an $n$-OU (resp. $n$-OO) matrix set where the integers $a_1,\cdots,a_n$  take $0$ or $1$. For convenience, we shall omit $(-1)^{a_i}$ when referring to the $n$-OU and $n$-OO matrix set in the following. To characterize the relation between matrix sets, we construct the following definition.  

\begin{definition}
	Suppose 
	$\mathcal{M}:=\{M_1,\cdots,M_n\}$ and $\mathcal{N}:=\{N_1,\cdots,N_n\}$ are two $n$-OU matrix (reps. $n$-OO) matrix sets. Further, there exist two unitary  (reps. real orthogonal) matrices $U$ and $V$ such that $UM_iV=(-1)^{a_i}N_{\pi(i)}$ for $i=1,\cdots,n$, where $a_i\in \{0,1\}$, $\pi(i)$ is a permutation of $1,\cdots,n$. Then we say that $\mathcal{M}$ and $\mathcal{N}$
	are unitarily (reps. orthogonally) equivalent.
\end{definition}

There has been some research in the unitary and orthogonal equivalence of two sets of general matrices \cite{gerasimova2013simultaneous,jing2015unitary}. It has been shown that the orthogonal equivalence of two sets of real matrices has applications in determining the local unitary equivalence between two quantum states \cite{gour2013classification,jing2014slocc}.
The following lemma gives a necessary condition of two orthogonally equivalent $n$-OO matrix sets, which can be proved straightforwardly.
\begin{lemma}
	\label{s}
	Suppose two $n$-OO matrix sets $\{M_1,\cdots,M_n\}$ and $\{N_1,\cdots,N_n\}$ are orthogonally equivalent, i.e., $UM_iV=(-1)^{a_i}N_{\pi(i)}$ for $i=1,\cdots,n$. Then for any real numbers $k_1,\cdots,k_n$,
	the two matrices $\sum_{i=1}^n k_i M_i$ and $\sum_{i=1}^n k_i (-1)^{a_i}N_{\pi(i)}$ have the same singular values.
\end{lemma}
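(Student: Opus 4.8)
The plan is to exploit the defining relation $UM_iV = (-1)^{a_i} N_{\pi(i)}$ together with the fact that multiplying a matrix on the left or right by a unitary (here real orthogonal) matrix is an isometry on the space of matrices, and more importantly preserves singular values. First I would form the linear combination $M := \sum_{i=1}^n k_i M_i$ and apply the same pair $(U,V)$ to it: since matrix multiplication is bilinear, $U M V = \sum_{i=1}^n k_i\, U M_i V = \sum_{i=1}^n k_i (-1)^{a_i} N_{\pi(i)} =: N$. So $M$ and $N$ are related by $N = UMV$ with $U, V$ real orthogonal.

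The second step is the standard fact that for any real orthogonal $U, V$, the matrices $M$ and $UMV$ have identical singular values. This follows immediately from the singular value decomposition: if $M = W_1 \Sigma W_2^T$ is an SVD, then $UMV = (UW_1)\Sigma(V^T W_2)^T$ exhibits an SVD of $UMV$ with the same diagonal matrix $\Sigma$, because $UW_1$ and $V^T W_2$ are again real orthogonal. Equivalently, one can note $(UMV)^T(UMV) = V^T M^T M V$ is similar to $M^T M$, hence has the same eigenvalues, whose square roots are the singular values. Applying this with our $N = UMV$ gives that $\sum_{i=1}^n k_i M_i$ and $\sum_{i=1}^n k_i (-1)^{a_i} N_{\pi(i)}$ share the same singular values, which is exactly the claim.

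Since both steps are elementary — bilinearity of multiplication and invariance of singular values under orthogonal transformations — there is no real obstacle here; the lemma is a direct consequence of the definition of orthogonal equivalence, and the role of the $n$-OO hypothesis is merely to guarantee that $U$ and $V$ are genuinely orthogonal (the orthogonality of the $M_i$ among themselves is not even needed for this particular statement). I would simply remark that the permutation $\pi$ and the signs $(-1)^{a_i}$ are absorbed into the coefficients and the indexing of the sum, so they cause no difficulty. The only thing worth stating carefully is that the argument works verbatim over $\mathbb{C}$ with unitary matrices, which is why the analogous statement for $n$-OU sets would hold as well, though it is not needed here.
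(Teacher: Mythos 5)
Your proof is correct and is exactly the "straightforward" argument the paper alludes to without writing out: bilinearity turns the defining relation into $U\bigl(\sum_i k_i M_i\bigr)V=\sum_i k_i(-1)^{a_i}N_{\pi(i)}$, and orthogonal (unitary) invariance of singular values finishes it. Nothing further is needed.
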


We now define a new kind of matrix decomposition in terms of $n$-OO and $n$-OU matrix sets. 

\begin{definition}
	\label{defd}
	(i)	If a matrix $M$ can be written as the linear combination of $n$ matrices which form an $n$-OU matrix set, then we say that $M$ has an $n$-OU decomposition.   
	
	(ii) 	If a real matrix $M$ can be written as the real linear combination of $n$ matrices which form an $n$-OO matrix set, then we say that $M$ has an $n$-OO decomposition.   
\end{definition}

Now we are ready to present the main results in the following sections.

	\section{ $n$-OO matrix and $n$-OU matrix sets}
		\label{sec:nOO+nOU}
	This section considers the matrix sets in Definition \ref{df}.	In Subsection \ref{ooA}, we give a characterization of order-three $n$-OO matrix sets. In Subsection \ref{ooB}, we list some known results of $n$-OU matrix sets and $n$-OO matrix sets beyond order three.
		\subsection{order-three $n$-OO matrix sets}
		\label{ooA}
		\begin{lemma}
			\label{so2}
			For any matrix $O\in \Ort(3)$, there exist 
			$O_\a, O_\beta\in \SO(2)$ and $R\in \Ort(2)$ such that $(1\oplus O_{\alpha})O(1\oplus O_{\beta})=R\oplus 1$.
			\end{lemma}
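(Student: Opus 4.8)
The plan is to argue geometrically, viewing $O$ as a linear isometry of $\bbR^3$ and tracking the image of the coordinate plane $P:=\lin\{e_2,e_3\}$. The target identity says exactly that, after conjugating $O$ by rotations acting on the last two coordinates, the result should fix $e_3$ and hence block-decompose as $R\oplus 1$. So the whole game is to arrange that the last column of the transformed matrix is $e_3$.

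First I would record two elementary transitivity facts, both consequences of $\SO(2)$ acting transitively on the unit circle. As $O_\beta$ ranges over $\SO(2)$, the vector $(1\oplus O_\beta)e_3$ ranges over \emph{all} unit vectors of $P$; applying the isometry $O$, the vector $O(1\oplus O_\beta)e_3$ therefore ranges over all unit vectors of the plane $O(P)$. Likewise, for any unit vector $u\in P$ there exists $O_\alpha\in\SO(2)$ with $(1\oplus O_\alpha)u=e_3$.

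Second --- and this is the only real idea in the argument --- the subspaces $P$ and $O(P)$ are both $2$-dimensional in $\bbR^3$, so their intersection has dimension at least $2+2-3=1$; fix a unit vector $u\in P\cap O(P)$. Since $u\in O(P)$, choose $O_\beta\in\SO(2)$ with $O(1\oplus O_\beta)e_3=u$; since $u\in P$, choose $O_\alpha\in\SO(2)$ with $(1\oplus O_\alpha)u=e_3$. Set $O':=(1\oplus O_\alpha)\,O\,(1\oplus O_\beta)$, so that $O'e_3=(1\oplus O_\alpha)u=e_3$ by construction.

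Finally I would read off the conclusion. The matrix $O'$ is orthogonal (a product of orthogonal matrices) and its last column is $O'e_3=e_3$; orthonormality of the columns forces the first two columns to have vanishing third entry, so the last row of $O'$ equals $e_3^{T}$, hence $O'=R\oplus 1$ for some $2\times 2$ block $R$, and $R\in\Ort(2)$ because $O'^{T}O'=I_3$ restricts to $R^{T}R=I_2$. Since $O_\alpha$ and $O_\beta$ were produced as elements of $\SO(2)$, this is precisely the asserted identity. No case distinction is needed: the argument covers the degenerate situation $O(P)=P$ uniformly, and the only step demanding any care is the dimension count guaranteeing a common unit vector $u\in P\cap O(P)$ --- which is where I expect the one nontrivial observation to lie.
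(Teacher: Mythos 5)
Your argument is correct, but it takes a genuinely different route from the paper's. The paper proceeds by explicit Givens-type elimination: it first picks $\alpha$ so that the $(3,1)$ entry of $(1\oplus O_\alpha)O$ vanishes, then picks $\beta$ so that the $(3,2)$ entry of $(1\oplus O_\alpha)O(1\oplus O_\beta)$ vanishes, leaving $\pm1$ in the $(3,3)$ slot, and finally absorbs a possible $-1$ by a further left multiplication by $1\oplus(-I_2)\in 1\oplus\SO(2)$. You instead make a single geometric observation --- the planes $P=\lin\{e_2,e_3\}$ and $O(P)$ must share a unit vector $u$ by the dimension count $2+2-3\ge 1$ --- and then use transitivity of $\SO(2)$ on the unit circle twice to route $e_3\mapsto O^{-1}u\mapsto u\mapsto e_3$. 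Your version buys two things: it avoids the trigonometric bookkeeping and, because $\SO(2)$ acts transitively on the whole circle (not merely up to sign), it lands on $+1$ in the $(3,3)$ entry directly, so no final sign correction is needed; it also generalizes verbatim to $\Ort(n)$ with $\SO(n-1)$ blocks. The paper's computation is more elementary and exhibits the angles $\alpha,\beta$ explicitly, which is convenient for the Mathematica-assisted steps later in Theorem \ref{2ge}. Your closing step (third column equal to $e_3$ plus orthonormality forces the block form $R\oplus 1$ with $R\in\Ort(2)$) is exactly right.
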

		\begin{proof}
			Any element $O_{\alpha}$ in $\SO(2)$ can be written as $\bma 
			\cos \alpha & -\sin \alpha \\
			\sin \alpha & \cos \alpha 
			\ema$. We can choose a real number $\alpha$ such that the entry in the third row and the first column of $(1\oplus O_{\alpha})O$ is zero. Next, we can choose a real number $\beta$ such that the first two entries in the third row of $(1\oplus O_a)O(1\oplus O_\b)$ are both zero. Thus the third diagonal entry of $(1\oplus O_a)O(1\oplus O_\b)$ is 1 or -1, and if it is $-1$ then we left multiply ($1\oplus -I_2$) and obtain the result.
		\end{proof}
		
The next theorem and  corollary give a characterization of order-three $n$-OO matrix sets for $n\le 3$. 
	\begin{theorem}
		\label{2ge}
		Let $\O_1=\bma 1&0&0\\0&-\frac{1}{2}&-\frac{\sqrt{3}}{2}\\0&\frac{\sqrt{3}}{2}&-\frac{1}{2}\ema$ and $\Omega_2=\bma -\frac{1}{3}& \frac{\sqrt{2}}{3}&\frac{\sqrt{6}}{3}\\
		\frac{\sqrt{2}}{3}&\frac{5}{6}&-\frac{\sqrt{3}}{6}\\-\frac{\sqrt{6}}{3}&\frac{\sqrt{3}}{6}&-\frac{1}{2}\ema$.
		
		(i) Any order-three 2-OO matrix set is orthogonally equivalent to $\{I_3, \O_1\}$.
		
		(ii) Any order-three 3-OO matrix set is orthogonally equivalent to $\mathcal{C}$ or $\mathcal{D}$, where \begin{eqnarray}
			\label{c1}
			&&\mathcal{C}:=\{I_3,\O_1,\O_1^T\},\\
			\label{c2}
			&&\mathcal{D}:=\{I_3,\O_1,\O_2\},
		\end{eqnarray}
		 are not orthogonally equivalent.
			\end{theorem}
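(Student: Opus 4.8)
The plan is to reduce the general problem to a normal form using Lemma \ref{so2} and then classify by case analysis on the off-diagonal structure. First, given any order-three 2-OO matrix set $\{M_1,M_2\}$, I would left/right multiply by $M_1^{-1}$ on one side to assume WLOG that one element is $I_3$ (this is an orthogonal equivalence since $M_1\in\Ort(3)$). Then $M_2$ is an orthogonal matrix with $\tr M_2 = 0$. Applying Lemma \ref{so2}, there exist $O_\a,O_\b\in\SO(2)$ such that $(1\oplus O_\a)M_2(1\oplus O_\b) = R\oplus 1$ for some $R\in\Ort(2)$; but conjugating $I_3$ by $(1\oplus O_\a)(\cdot)(1\oplus O_\b)$ does \emph{not} fix $I_3$, so I instead need to be careful: the right move is to bring $M_2$ to a form whose first column/row is controlled while keeping $I_3$ fixed, i.e. use a \emph{conjugation} $Q M_2 Q^T$ with $Q\in\Ort(3)$. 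Since $M_2$ is a real orthogonal matrix with trace $0$, its eigenvalues are $\{1,e^{i\theta},e^{-i\theta}\}$ or $\{-1,-1,1\}$-type configurations; the trace-zero condition forces $1 + 2\cos\theta = 0$ when there is a real eigenvalue $+1$, giving $\theta = 2\pi/3$, hence $M_2$ is conjugate (by a special orthogonal matrix) to a rotation by $120^\circ$ in some plane, which is exactly $\O_1$ (and the $-1$-determinant cases reduce to this after absorbing signs, using the remark that signs $(-1)^{a_i}$ are free). This proves (i); the key point is that the trace-orthogonality constraint rigidly pins down the rotation angle.

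For (ii), start from the normalized form $\{I_3,\O_1,M_3\}$ guaranteed by (i), where $M_3\in\Ort(3)$ satisfies $\tr M_3 = 0$ and $\tr(\O_1^T M_3)=0$. By (i)'s analysis, $M_3$ is (up to sign) a $120^\circ$ rotation about some axis $v\in S^2$. The residual freedom that preserves $\{I_3,\O_1\}$ is the stabilizer of $\O_1$ in $\Ort(3)$ under conjugation — namely rotations about the $\O_1$-axis $e_1$ together with the reflection fixing the rotation plane — so I would use this one-parameter-plus-discrete group to move the axis $v$ of $M_3$ into a canonical position relative to $e_1$. The two orthogonality conditions $\tr M_3 = \tr(\O_1^T M_3) = 0$ translate, via the Rodrigues formula $\tr(\text{rotation by }\phi\text{ about }v) = 1+2\cos\phi$ and $\tr(\O_1^T M_3)$ expressed through $\langle e_1, v\rangle$ and $\phi$, into equations that should leave only finitely many orbits: I expect exactly the angle $\langle e_1,v\rangle$ between the two rotation axes to be forced to one or two discrete values, yielding the two candidates $\mathcal{C}=\{I_3,\O_1,\O_1^T\}$ (axes coincide, $M_3$ is the inverse rotation) and $\mathcal{D}=\{I_3,\O_1,\O_2\}$ (axes at the other admissible angle). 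One then checks $\O_2$ as written is orthogonal with the right traces.

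Finally I would show $\mathcal{C}$ and $\mathcal{D}$ are not orthogonally equivalent by exhibiting an invariant: by Lemma \ref{s}, orthogonally equivalent $n$-OO sets have the property that $\sum k_i M_i$ and the permuted-signed counterpart share singular values for all real $k_i$. Since $I_3\in\mathcal{C}\cap\mathcal{D}$ and $\O_1$ appears in both, but the third elements differ, I would compute the singular values of $aI_3+b\O_1+c\O_1^T$ versus $aI_3+b\O_1+c\O_2$ for a well-chosen $(a,b,c)$ — e.g. note $\O_1^T=\O_1^{-1}$ commutes with $\O_1$ so $I_3,\O_1,\O_1^T$ are simultaneously diagonalizable and $\mathcal{C}$-combinations have a very constrained singular-value spectrum, whereas $\O_2$ does not commute with $\O_1$, so generic $\mathcal{D}$-combinations have a different spectrum. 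Matching against all $3!\cdot 2^3$ sign/permutation possibilities (most of which are killed immediately by the $I_3$ term) completes the argument. The main obstacle I anticipate is the bookkeeping in the second step: correctly identifying the stabilizer group of $\{I_3,\O_1\}$ and verifying that the trace equations genuinely collapse to the two stated orbits rather than a continuum — this is where a careful parametrization of $M_3$ by (axis, angle) and honest substitution into the two linear-in-$M_3$ conditions is essential.
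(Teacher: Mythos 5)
Your proposal follows essentially the same route as the paper: part (i) via the eigenvalues of $O_1^TO_2$ and real orthogonal quasi-diagonalization, the reduction in (ii) to $\{I_3,\O_1,M_3\}$ with the residual stabilizer of $\O_1$ (rotations about $e_1$) used to cut the remaining freedom down to one parameter, and Lemma \ref{s} with the combination $I_3+\O_1+\O_1^T$ versus the signed combinations of $\mathcal D$ for the non-equivalence. Your axis--angle framing is an equivalent, arguably cleaner, substitute for the paper's explicit decomposition $Q=(1\oplus O_\alpha)(R\oplus1)(1\oplus O_\beta)$ from Lemma \ref{so2} followed by a Mathematica solve over $R\in\Ort(2)$. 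The one genuine hole is that the decisive computation is only ``expected'': you must actually show that the two trace conditions force $t:=\langle e_1,v\rangle$ into finitely many values. This does work, and can be done in one line with the quaternion composition rule: for $M_3=R_v(2\pi/3)$ one has $\tr(\O_1^TM_3)=4c^2-1$ with $c=\cos(\pi/3)^2+\sin(\pi/3)^2\,t=\tfrac14+\tfrac34 t$, so $\tr(\O_1^TM_3)=0$ forces $t=-1$ (axis $-e_1$, i.e.\ $M_3=\O_1^T$, the class $\mathcal C$) or $t=\tfrac13$ (a single circle of axes, made one orbit by the $\SO(2)$ stabilizer, the class $\mathcal D$); the determinant $-1$ case reduces to this by absorbing a global sign. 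With that identity inserted, and the eight sign cases for the singular values of $\pm I_3\pm\O_1\pm\O_2$ actually checked (your commutativity observation for $\mathcal C$ is a good shortcut, and with equal coefficients the $3!$ permutations are irrelevant, so only $2^3$ cases arise), your argument closes.
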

	\begin{proof}
	(i) Suppose $\{O_1,O_2\}$ is an order-three 2-OO matrix set. We have $\tr(O_1^TO_2)=0$. Note that the eigenvalues of $O_1^TO_2\in \Ort(3)$ have modules one and the complex eigenvalues appear in conjugate pair. The traceless condition implies that the eigenvalues of $O_1^TO_2$ are either $1,-\frac{1}{2}+\frac{\sqrt{3}}{2}i,-\frac{1}{2}-\frac{\sqrt{3}}{2}i$ or $-1,\frac{1}{2}+\frac{\sqrt{3}}{2}i,\frac{1}{2}-\frac{\sqrt{3}}{2}i$.
   Based on the quasi-diagonalization of real matrix (see \cite[Theorem 2.5.8, p136]{horn2012matrix}), there exists  $P\in \Ort(3)$ such that $O_1^TO_2=(-1)^aP^T\O_1 P$ where $a=0$ or 1 corresponds to the two cases of the eigenvalues. Further, we have
   \begin{eqnarray}
   	(PO_1^T)O_1P^T=I_3, \quad (PO_1^T)O_2P^T=(-1)^a\O_1.
   \end{eqnarray}
  This implies that  $\{O_1,O_2\}$ is orthogonally equivalent to $\{I_3, \O_1\}$.

(ii) One can verify that both $\mathcal{C}$ and $\mathcal{D}$ are 3-OO matrix sets. We first show that $\mathcal{C}$ and $\mathcal{D}$ are not orthogonally equivalent. Otherwise by Lemma \ref{s}, there exist $a_1,a_2,a_3\in \{0,1\}$ such that the matrices $S_1:=I_3+\O_1+\O_1^T$ and $S_2:=(-1)^{a_1}I_3+(-1)^{a_2}\O_1+(-1)^{a_3}\O_2$ have the same singular values.  There are eight cases in terms of $a_1,a_2,a_3$. For each case, by calculation, we obtain that the singular values of $S_2$ are not the same as that of $S_1$.  
This is a contradiction. Hence $\mathcal{C}$ and $\mathcal{D}$ 
are not orthogonally equivalent.

We next prove that any order-three 3-OO matrix set $\{O_1,O_2,O_3\}$ is orthogonally equivalent to $\mathcal{C}$ or $\mathcal{D}$. From (i), we know that $\{O_1,O_2\}$ is orthogonally equivalent to $\{I_3,\O_1\}$. Thus we only need to find $O_3'\in \Ort(3)$ satisfying $\tr(O_3')=0$ and $\tr(\O_1^TO_3')=0$. The first condition implies that  there exists $Q\in \Ort(3)$ such that $O_3'=(-1)^{a}Q^T\O_1Q$ where $a\in \{0,1\}$. Using Lemma \ref{so2}, we can write 
$Q=(1\oplus O_\alpha)(R\oplus 1)(1\oplus O_\beta) $ where $O_\alpha,O_\beta\in \SO(2)$ and $R\in \Ort(2)$. Note that $\SO(2)$ is an abelian group, we have 
\begin{eqnarray}
	O_3'=(-1)^{a}(1\oplus O_\beta^T)(R^T\oplus 1)\O_1 (R\oplus 1)(1\oplus O_\beta).
\end{eqnarray}
Next, let 
\begin{eqnarray}
	\label{o3''}
	O_3''=(-1)^{a}(R^T\oplus 1)\O_1 (R\oplus 1).
\end{eqnarray}
One can verify that $\{I_3,\O_1,O_3'\}$ is orthogonally equivalent to $\{I_3,\O_1,O_3''\}$ through the matrix $1\oplus O_\beta$ and its transpose.
The orthogonality between $O_3''$ and $\O_1$ implies that the matrix $R$ in (\ref{o3''}) satisfies
\begin{eqnarray}
	\label{eq:trbma}	
	\tr(\O_1^T (R^T\oplus1)\O_1 (R\oplus1))=0.
\end{eqnarray}
We use the Software Mathematica to solve the equation (\ref{eq:trbma}). There are six solutions:
\begin{eqnarray}
	\notag
	&&R_1=\bma -1&0\\0&-1\ema,R_2=\bma 1&0\\0&-1\ema,
	R_3=\bma \frac{1}{3}&\frac{2\sqrt{2}}{3}\\-\frac{2\sqrt{2}}{3}&\frac{1}{3}\ema,
	R_4=\bma -\frac{1}{3}&-\frac{2\sqrt{2}}{3}\\-\frac{2\sqrt{2}}{3}&\frac{1}{3}\ema,\\
	&&R_5=\bma \frac{1}{3}&-\frac{2\sqrt{2}}{3}\\\frac{2\sqrt{2}}{3}&\frac{1}{3}\ema,
	R_6=\bma -\frac{1}{3}&\frac{2\sqrt{2}}{3}\\\frac{2\sqrt{2}}{3}&\frac{1}{3}\ema.
\end{eqnarray}
Further calculations imply that 
\begin{eqnarray}
	\notag
	&&(R_1^T\oplus1)\O_1 (R_1\oplus1)
	=(R_2^T\oplus1)\O_1 (R_2\oplus1)=\O_1^T,\\
	\notag
	&&(R_3^T\oplus1)\O_1 (R_3\oplus1)
	=(R_4^T\oplus1)\O_1 (R_4\oplus1)=\O_2,\\
	&&(R_5^T\oplus1)\O_1 (R_5\oplus1)=(R_6^T\oplus1)\O_1 (R_6\oplus1).
\end{eqnarray}	
Note that $R_5=R_3^T$. This implies that $\{I_3,\O_1,(R_3^T\oplus1)\O_1 (R_3\oplus1)\}$ and $\{I_3,\O_1,(R_5^T\oplus1)\O_1 (R_5\oplus1)\}$ are orthogonally equivalent. Hence the six solutions correspond to the two 3-OO matrix sets $\mathcal{C}$ in (\ref{c1}) and $\mathcal{D}$ in (\ref{c2}). 
\end{proof}

		\begin{corollary}
		\label{3onew}
	Let
		\begin{eqnarray}
			\label{G14}
	G_1=\bma 1&0&0\\0&0&1\\0&1&0\ema, G_2=\bma 0&0&1\\0&1&0\\1&0&0\ema, G_3=\bma 0&-1&0\\-1&0&0\\0&0&1\ema, G_3'=\bma 0&1&0\\1&0&0\\0&0&1\ema
		\end{eqnarray}
	and 
		\begin{eqnarray}
			\label{c1'}
			&&\mathcal{G}_1:=\{G_1,G_2,G_3'\},\\
			\label{c2'}
			&&\mathcal{G}_2:=\{G_1,G_2,G_3\}.
		\end{eqnarray}
		Then the 3-OO matrix set $\mathcal{C}$ in (\ref{c1}) is orthogonally equivalent to $\mathcal{G}_1$, $\mathcal{D}$ in (\ref{c2}) is orthogonally equivalent to $\mathcal{G}_2$. As a consequence, any order-three 3-OO matrix set is orthogonally equivalent to $\mathcal{G}_1$ or $\mathcal{G}_2$.
	\end{corollary}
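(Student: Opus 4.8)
The plan is to reduce everything to Theorem \ref{2ge}(ii). Its ``consequence'' clause is then a formality: every order-three 3-OO matrix set is orthogonally equivalent to $\mathcal{C}$ or $\mathcal{D}$, so once we know $\mathcal{C}\sim\mathcal{G}_1$ and $\mathcal{D}\sim\mathcal{G}_2$, transitivity of orthogonal equivalence gives the claim. The substance is therefore to locate $\mathcal{G}_1$ and $\mathcal{G}_2$ among the two equivalence classes of Theorem \ref{2ge}(ii). I would not build the equivalences by hand; instead I would separate the classes with the singular-value obstruction of Lemma \ref{s}. Before that one must check that $\mathcal{G}_1,\mathcal{G}_2$ are 3-OO sets at all: $G_1,G_2,G_3'$ are the permutation matrices of the three transpositions of $S_3$ and $G_3=\diag(-1,-1,1)\,G_3'$ is a signed permutation matrix, so all four are in $\Ort(3)$; for permutation matrices $P,Q$ the quantity $\tr(P^TQ)$ counts the indices at which the underlying permutations agree, and any two distinct transpositions of $S_3$ agree at no index, which kills the three Hilbert--Schmidt inner products inside $\mathcal{G}_1$, while for $\mathcal{G}_2$ the two inner products involving $G_3$ reduce to the trace of $\diag(-1,-1,1)$ times a $3$-cycle permutation matrix, which is again $0$.

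To see $\mathcal{C}\sim\mathcal{G}_1$, compute $G_1+G_2+G_3'=J_3$, the all-ones matrix, with singular values $(3,0,0)$; these coincide with the singular values of $I_3+\Omega_1+\Omega_1^T=\diag(3,0,0)$. On the other hand, the proof of Theorem \ref{2ge}(ii) already shows that all eight matrices $(-1)^{a_1}I_3+(-1)^{a_2}\Omega_1+(-1)^{a_3}\Omega_2$ have singular value triples different from $(3,0,0)$. By Lemma \ref{s} with $k_1=k_2=k_3=1$, $\mathcal{G}_1$ is not orthogonally equivalent to $\mathcal{D}$; since Theorem \ref{2ge}(ii) forces it to be equivalent to $\mathcal{C}$ or $\mathcal{D}$, we get $\mathcal{G}_1\sim\mathcal{C}$.

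To see $\mathcal{D}\sim\mathcal{G}_2$, I would first show $\mathcal{G}_1\not\sim\mathcal{G}_2$, again by Lemma \ref{s} with $k_1=k_2=k_3=1$: $G_1+G_2+G_3'=J_3$ has rank $1$, whereas each of the eight matrices $\pm G_1\pm G_2\pm G_3$ has rank at least $2$ (for instance $\det(G_1+G_2+G_3)=-4$, and every remaining sign pattern yields a matrix with two linearly independent rows), so no $\pm G_1\pm G_2\pm G_3$ has singular values $(3,0,0)$. Theorem \ref{2ge}(ii) makes $\mathcal{G}_2$ equivalent to $\mathcal{C}$ or $\mathcal{D}$; were it equivalent to $\mathcal{C}$, then with $\mathcal{G}_1\sim\mathcal{C}$ we would get $\mathcal{G}_1\sim\mathcal{G}_2$, a contradiction. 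Hence $\mathcal{G}_2\sim\mathcal{D}$, and the corollary follows as in the first paragraph.

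I expect no serious obstacle here: the only real work is the short list of singular-value (equivalently rank/determinant) computations that separate the classes, and these largely recycle the case analysis already done for Theorem \ref{2ge}(ii). For completeness I note a more computational alternative that avoids Lemma \ref{s}: put $V=G_1U^T$, so that $G_1\mapsto I_3$ and $U G_2 V,\,U G_3' V$ (resp. $U G_3 V$) become $U(G_2G_1)U^T,\,U(G_3'G_1)U^T$ (resp. $U(G_3G_1)U^T$), where $G_2G_1$ and $G_3'G_1$ are the two $3$-cycle permutation matrices; then choose $U$ with first row $\tfrac1{\sqrt3}(1,1,1)$ --- the common rotation axis of the $3$-cycles --- conjugating a $3$-cycle onto $\Omega_1=1\oplus(\text{rotation by }120^\circ)$. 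In that route the genuinely fiddly point is the residual one-parameter freedom in $U$ (left multiplication by rotations $1\oplus R_\phi$ about $e_1$, which commute with $\Omega_1$): one must pin down $\phi$ so that a single $U$ simultaneously sends the third matrix to $\Omega_1^T$, respectively to $\Omega_2$.
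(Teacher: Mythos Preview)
Your proof is correct but proceeds quite differently from the paper. The paper simply exhibits an explicit matrix
\[
B=\begin{pmatrix}
\frac{1}{\sqrt{3}} & -\sqrt{\tfrac{2}{3}} & 0 \\
\frac{1}{\sqrt{3}} & \frac{1}{\sqrt{6}} & -\frac{1}{\sqrt{2}} \\
\frac{1}{\sqrt{3}} & \frac{1}{\sqrt{6}} & \frac{1}{\sqrt{2}}
\end{pmatrix}\in\Ort(3)
\]
and verifies the four identities $(B^TG_1)G_1B=I_3$, $(B^TG_1)G_2B=\Omega_1$, $(B^TG_1)G_3'B=\Omega_1^T$, $(B^TG_1)G_3B=\Omega_2$; this is essentially the ``computational alternative'' you sketch at the end, with the one-parameter freedom already resolved. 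Your main argument instead runs the classification of Theorem \ref{2ge}(ii) in reverse: once $\mathcal{G}_1,\mathcal{G}_2$ are known to be 3-OO sets, each must land in the class of $\mathcal{C}$ or of $\mathcal{D}$, and you separate them by the singular-value invariant of Lemma \ref{s} applied to $k_1=k_2=k_3=1$ (the rank-one sum $J_3$ versus the rank-$\ge 2$ sums $\pm G_1\pm G_2\pm G_3$, together with the eight-case check already carried out in Theorem \ref{2ge}(ii)). Your route is economical in that it reuses the case analysis of Theorem \ref{2ge}(ii) and avoids having to guess $B$; the paper's route has the advantage of producing the equivalence explicitly, which is what later parts of the paper (Theorem \ref{unt}, Theorem \ref{condition}) actually rely on when they manipulate $G_1,\dots,G_4$ concretely.
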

	\begin{proof}
			Let $B=\bma \frac{1}{\sqrt{3}} & -\sqrt{\frac{2}{3}} & 0 \\
		\frac{1}{\sqrt{3}} & \frac{1}{\sqrt{6}} & -\frac{1}{\sqrt{2}} \\
		\frac{1}{\sqrt{3}} & \frac{1}{\sqrt{6}} & \frac{1}{\sqrt{2}}\ema \in \Ort(3)$. The claim is directly proven by the following identities,
\begin{alignat}{2}
	\notag
	(B^TG_1)G_1B &= I_3, \quad &(B^TG_1)G_2B &= \O_1,\\
	(B^TG_1)G_3'B &= \O_1^T, \quad &(B^TG_1)G_3B &= \O_2.
\end{alignat}
	\end{proof}
	
	Using the above results, we continue to characterize order-three $n$-OO matrix sets for $n>3$.
		\begin{theorem}
			\label{unt}
	(i)		There do not exist any $O\in \Ort(3)$ that is orthogonal to the three matrices of $\mathcal{G}_1$ in (\ref{c1'}). In other words, the 3-OO matrix set $\mathcal{G}_1$ is unextendible.
	
	(ii) Let $G_4=\bma-\frac{1}{2}&\frac{-1-\sqrt{5}}{4}&\frac{1-\sqrt{5}}{4}\\\frac{-1+\sqrt{5}}{4}&-\frac{1}{2}&\frac{1+\sqrt{5}}{4}\\\frac{1+\sqrt{5}}{4}&\frac{1-\sqrt{5}}{4}&-\frac{1}{2}\ema$.	Any order-three 4-OO matrix set is orthogonally equivalent to
	\begin{eqnarray}
		\label{4o}
		\mathcal{G}:=\{G_1,G_2,G_3,G_4\}.
	\end{eqnarray}
	
	(iii) There does not exist any order-three $n$-OO matrix set for $n\ge 5$. 
		\end{theorem}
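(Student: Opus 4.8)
The plan is to leverage Theorem \ref{2ge} and Corollary \ref{3onew}, which reduce every order-three 3-OO matrix set to one of the two normal forms $\mathcal{G}_1$ or $\mathcal{G}_2$. Part (iii) then follows from parts (i) and (ii): a 5-OO set would contain a 4-OO subset, which by (ii) is orthogonally equivalent to $\mathcal{G}=\{G_1,G_2,G_3,G_4\}$, and one checks that $\{G_1,G_2,G_3\}=\mathcal{G}_2$ is orthogonally equivalent to $\mathcal{G}_2$ while the fifth matrix would have to be orthogonal to all four of $G_1,G_2,G_3,G_4$; so it suffices to show no such extension exists. Thus the whole theorem rests on two finite computations, and the main structural input is already in place.

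For part (i), I would start from $\mathcal{G}_1=\{G_1,G_2,G_3'\}$ and let $O=(o_{ij})\in\Ort(3)$ be an unknown matrix. The three linear conditions $\tr(G_1^TO)=\tr(G_2^TO)=\tr(G_3'^TO)=0$ read $o_{11}+o_{23}+o_{32}=0$, $o_{13}+o_{22}+o_{31}=0$, $o_{12}+o_{21}+o_{33}=0$; summing these gives $\sum_{i,j}o_{ij}=0$, i.e. the all-ones vector $\mathbf{1}$ satisfies $\mathbf{1}^TO\mathbf{1}=0$. Conjugating by the matrix $B$ of Corollary \ref{3onew} (whose first column is $\mathbf{1}/\sqrt3$) turns $\mathcal{G}_1$ into $\mathcal{C}=\{I_3,\O_1,\O_1^T\}$, so equivalently I must show no $O'\in\Ort(3)$ is orthogonal to $I_3,\O_1,\O_1^T$ simultaneously. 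The conditions $\tr(O')=\tr(\O_1O')=\tr(\O_1^TO')=0$ force the $2\times2$ lower-right block of $O'$ to be traceless and to satisfy two further linear constraints; combined with the orthogonality of $O'$ (each row and column a unit vector, rows mutually orthogonal) this becomes an overdetermined polynomial system. I would feed it to Mathematica (as the authors do for \eqref{eq:trbma}) and verify it has no real solution. The conceptual reason, which I would note, is that $\mathcal{C}$ spans the same space as $\{I_3,\O_1,\O_1^2\}$ — the cyclic group generated by a $120^\circ$ rotation about the axis $(1,0,0)$ together with a sign structure — and the orthogonal complement of this three-dimensional subspace inside $M_3(\mathbb{R})$ contains no orthogonal matrix.

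For part (ii), I would similarly fix the normal form $\mathcal{G}_2=\{G_1,G_2,G_3\}$ for the first three elements (legitimate by Corollary \ref{3onew}, since any 3-OO subset of a 4-OO set is equivalent to $\mathcal{G}_1$ or $\mathcal{G}_2$, and part (i) rules out $\mathcal{G}_1$ as it is unextendible). Writing $O_4=(o_{ij})\in\Ort(3)$ orthogonal to $G_1,G_2,G_3$ gives three linear equations; solving them and imposing the orthogonal-group equations yields a finite solution set, which I expect (again via Mathematica) to consist exactly of $G_4$ together with its sign variants and the images of $G_4$ under the residual symmetry group fixing $\{G_1,G_2,G_3\}$ setwise — hence a single set up to orthogonal equivalence, namely $\mathcal{G}$. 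Then for (iii), with $\mathcal{G}=\{G_1,G_2,G_3,G_4\}$ now pinned down, I add one more unknown $O_5\in\Ort(3)$ orthogonal to all four; the four linear trace conditions plus the orthogonality constraints should have no solution, closing the argument. The main obstacle is bookkeeping: these are genuinely overdetermined systems of quadratic equations in nine variables, and the only real work is organizing the case analysis on the discrete sign/permutation ambiguities so that the computer-algebra verification is clean and the claim "unique up to orthogonal equivalence" in (ii) is actually justified rather than merely exhibited — i.e. checking that every solution branch lies in the orbit of $G_4$.
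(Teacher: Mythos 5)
Your proposal is correct and follows essentially the same route as the paper: reduce the first three elements to the normal forms of Corollary \ref{3onew}, use part (i) to rule out $\mathcal{G}_1$, and settle parts (i)--(iii) by finite computer-algebra searches, with the solutions $\pm G_4,\pm G_4'$ in (ii) identified into a single equivalence class via the symmetry swapping $G_1$ and $G_2$. Your added remarks (the linear trace conditions, and the explicit insistence that every solution branch in (ii) be shown to lie in the orbit of $G_4$) only make the same argument slightly more explicit than the paper's.
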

		\begin{proof}
	(i) We use the Software Mathematica to find such a matrix  based on the orthogonality conditions. There does not exist any solution. Hence the claim is proven.
	
	(ii) Suppose $\{O_1,O_2,O_3,O_4\}$ is an order-three 4-OO matrix set. Using Corollary \ref{3onew} and (i), we conclude that $\{O_1,O_2,O_3\}$ can only be orthogonally equivalent to $\mathcal{G}_2$. Consequently, we only need to find all the matrices that are orthogonal to the three matrices $G_1,G_2,G_3$ in (\ref{G14}). By calculation, the solutions are $\pm G_4$ and $\pm G_4'$ where
	$G_4'= \bma-\frac{1}{2}&\frac{-1+\sqrt{5}}{4}&\frac{1+\sqrt{5}}{4}\\\frac{-1-\sqrt{5}}{4}&-\frac{1}{2}&\frac{1-\sqrt{5}}{4}\\\frac{1-\sqrt{5}}{4}&\frac{1+\sqrt{5}}{4}&-\frac{1}{2}\ema$. Further, one can verify that
	\begin{eqnarray}
		\notag
	&&G_3'G_1G_3'=G_2, \quad G_3'G_2G_3'=G_1\\
	&&G_3'G_3G_3'=G_3, \quad G_3'G_4G_3'=G_4'.
	\end{eqnarray}
This implies that the two sets $\{G_1,G_2,G_3,G_4\}$ and $\{G_1,G_2,G_3,G_4'\}$ are orthogonally equivalent. We have proved the assertion.

(iii) Suppose there exists an order-three 5-OO matrix set. From (ii), we know that four of them are orthogonally equivalent to the 4-OO matrix set $\mathcal{G}$. So we only need to find an orthogonal matrix that is orthogonal to $G_1,G_2,G_3,G_4$. However, applying the calculations by the Software Mathematica, we do not find any solutions. Hence we conclude that the claim is proven.
		\end{proof}

			\subsection{$n$-OU and $n$-OO matrix sets of other orders}
		\label{ooB}
		\begin{proposition}
			\label{ou}
			For any order-$d$, there exists a $d^2$-OU matrix set. The element in the set is
			\begin{eqnarray}
				U_{n,m}=\sum_{k=0}^{d-1}\omega_d^{kn}\ket{k\oplus m}\bra{k},
			\end{eqnarray}
			where $n,m=0,1,\cdots,d-1$, $\omega_d$ is any primitive $d$th root of unity, and $k\oplus m$ denotes $k+m$ mod $d$. These matrices form a group which corresponds to the Weyl–Heisenberg group \cite{weyl1950theory}.
		\end{proposition}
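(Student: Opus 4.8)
The plan is to verify the three defining properties of a $d^2$-OU matrix set directly from the explicit formula, and then to read off the group structure. It is cleanest to first rewrite $U_{n,m}=X^mZ^n$, where $X:=\sum_{k=0}^{d-1}\ket{k\oplus 1}\bra{k}$ is the cyclic shift operator and $Z:=\sum_{k=0}^{d-1}\omega_d^{\,k}\ket{k}\bra{k}$ is the clock operator; a one-line computation of $X^mZ^n$ reproduces $\sum_k\omega_d^{\,kn}\ket{k\oplus m}\bra{k}=U_{n,m}$.

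First I would check unitarity. Since $X$ is a permutation matrix and $Z$ is diagonal with unimodular diagonal entries, both are unitary, hence so is $U_{n,m}=X^mZ^n$; equivalently, one computes $U_{n,m}^\dagger=\sum_k\omega_d^{-kn}\ket{k}\bra{k\oplus m}$ and then $U_{n,m}^\dagger U_{n,m}=\sum_k\ket{k}\bra{k}=I_d$.

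Next I would compute the Hilbert--Schmidt inner products. Expanding the product gives $U_{n,m}^\dagger U_{n',m'}=\sum_{k,l}\omega_d^{\,ln'-kn}\,\delta_{k\oplus m,\,l\oplus m'}\ket{k}\bra{l}$, and taking the trace forces $k=l$ and hence $m\equiv m'\pmod d$; if $m\neq m'$ the trace is $0$, while if $m=m'$ it equals the geometric sum $\sum_{k=0}^{d-1}\omega_d^{\,k(n'-n)}$, which is $d$ when $n\equiv n'\pmod d$ and $0$ otherwise because $\omega_d$ is a primitive $d$th root of unity. Thus $\tr(U_{n,m}^\dagger U_{n',m'})=d\,\delta_{n,n'}\delta_{m,m'}$, so the $d^2$ matrices indexed by $(n,m)\in\{0,\dots,d-1\}^2$ are pairwise orthogonal, which is exactly the assertion.

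Finally, for the group statement I would record the Weyl commutation relation $ZX=\omega_d XZ$ (immediate from the definitions of $X$ and $Z$), which yields $Z^nX^{m'}=\omega_d^{\,nm'}X^{m'}Z^n$ and therefore $U_{n,m}U_{n',m'}=\omega_d^{\,nm'}U_{n\oplus n',\,m\oplus m'}$. Hence $\{\omega_d^{\,j}U_{n,m}:j,n,m\in\{0,\dots,d-1\}\}$ is closed under products and inverses, i.e. it is the Weyl--Heisenberg group, and the $d^2$-OU set above is its image in the quotient by the center $\{\omega_d^{\,j}I_d\}$. No step here is genuinely difficult; the only place demanding care is the mod-$d$ index bookkeeping in the trace computation together with the correct use of primitivity of $\omega_d$ to make the off-diagonal geometric sums vanish, so that is the step at which I would slow down.
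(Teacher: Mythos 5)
Your verification is correct: the factorization $U_{n,m}=X^mZ^n$, the trace computation giving $\tr(U_{n,m}^\dagger U_{n',m'})=d\,\delta_{n,n'}\delta_{m,m'}$ (with primitivity of $\omega_d$ used exactly where needed to kill the off-diagonal geometric sums), and the commutation relation $ZX=\omega_d XZ$ establishing closure are all sound. The paper offers no proof of this proposition at all --- it is stated as a known fact with a citation to the Weyl--Heisenberg group --- so your write-up is simply the standard explicit verification of what the paper takes for granted, and there is nothing to fault in it.
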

		
		\begin{proposition}
			\label{2k}
			For any order $d=2^k$ where $k$ is a positive integer, there exists a $d^2$-OO matrix set $\{A_{i_1}\otimes \cdots \otimes A_{i_k}\}_{i_1,\cdots,i_k=0}^3$, where
			\begin{eqnarray}
				\label{2jie}
				A_0=\bma 1&0\\0&1\ema, A_1=\bma 1&0\\0&-1\ema, A_2=\bma 0&1\\1&0\ema, A_3=\bma 0&1\\-1&0\ema.
			\end{eqnarray}
		\end{proposition}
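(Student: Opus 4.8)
The plan is to reduce the whole statement to one order-two computation and then propagate it through Kronecker products.

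\textbf{Step 1 (base case).} I would first verify directly that $\{A_0,A_1,A_2,A_3\}$ is a $4$-OO matrix set of order two: each satisfies $A_i^TA_i=I_2$, so $A_i\in\Ort(2)$, and $\tr(A_i^TA_j)=2\delta_{ij}$. The second claim follows by inspecting the six off-diagonal products; e.g. $A_1^TA_2=\bma 0&1\\-1&0\ema$ and $A_2^TA_3=\bma -1&0\\0&1\ema$ are traceless, while each $A_i^TA_i=I_2$ has trace $2$.

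\textbf{Step 2 (lifting to order $d=2^k$).} Using the standard Kronecker identities $(X\otimes Y)^T=X^T\otimes Y^T$, $(X\otimes Y)(Z\otimes W)=(XZ)\otimes(YW)$, and $\tr(X\otimes Y)=\tr(X)\tr(Y)$, I would compute, for index tuples $(i_1,\dots,i_k)$ and $(j_1,\dots,j_k)$ in $\{0,1,2,3\}^k$,
\[
(A_{i_1}\otimes\cdots\otimes A_{i_k})^T(A_{j_1}\otimes\cdots\otimes A_{j_k})=(A_{i_1}^TA_{j_1})\otimes\cdots\otimes(A_{i_k}^TA_{j_k}).
\]
Taking $(i_1,\dots,i_k)=(j_1,\dots,j_k)$ gives $I_2\otimes\cdots\otimes I_2=I_{2^k}$, so every element of the proposed family lies in $\Ort(2^k)=\Ort(d)$. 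Taking the trace in the general case and applying Step 1 gives
\[
\tr\!\big((A_{i_1}\otimes\cdots\otimes A_{i_k})^T(A_{j_1}\otimes\cdots\otimes A_{j_k})\big)=\prod_{l=1}^k\tr(A_{i_l}^TA_{j_l})=2^k\prod_{l=1}^k\delta_{i_l,j_l},
\]
which vanishes exactly when $(i_1,\dots,i_k)\neq(j_1,\dots,j_k)$.

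\textbf{Step 3 (counting) and conclusion.} There are $4^k=(2^k)^2=d^2$ tuples in $\{0,1,2,3\}^k$, and by Step 2 the corresponding real orthogonal matrices are pairwise orthogonal, hence distinct, which produces the claimed $d^2$-OO matrix set of order $d$. There is no genuine obstacle here: the argument is routine, and the only points worth double-checking are the base-case identity $\tr(A_i^TA_j)=2\delta_{ij}$ and the three Kronecker-product facts used in Step 2, all elementary.
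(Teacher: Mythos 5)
Your proof is correct. The paper states Proposition \ref{2k} without any proof, and your argument — verifying the order-two base case $\tr(A_i^TA_j)=2\delta_{ij}$ and propagating it through the multiplicativity of the Hilbert--Schmidt inner product under Kronecker products — is exactly the standard argument the authors evidently have in mind, so it fills the omission rather than diverging from the paper.
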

	
		From above, we know that $d^2$ is the largest number of $n$  for an $n$-OU matrix set,  and an $n$-OO matrix set if $d=2^k$.  It is then natural to ask for order $d\neq 2^k$, what is the largest number of $n$ for an $n$-OO matrix set. We have shown that the number is four when $d=3$. The following shows that it is not less than $d$, but how to show the upper bound remains an open problem.
		\begin{proposition}
			\label{noo}
			There exists an order-$d$ $d$-OO matrix set for any positive integer $d$. The element $P_k$ ($k=0,1,\cdots,d-1$) is the signed permutation matrix whose $(i,i+k \mod d)$ entries are $\pm 1$ and other entries are 0.
		\end{proposition}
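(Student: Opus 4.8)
The plan is to prove Proposition~\ref{noo} by a direct verification that the family $\mathcal{P}:=\{P_0,P_1,\dots,P_{d-1}\}$ satisfies the two requirements of Definition~\ref{df}(ii): each $P_k$ is a real orthogonal matrix, and the matrices are pairwise orthogonal under the Hilbert--Schmidt inner product. No deep machinery is needed; the whole argument is bookkeeping with the index arithmetic modulo $d$.

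First I would record that each $P_k$ is a signed permutation matrix: it has exactly one nonzero entry, equal to $\pm 1$, in each row and in each column (the nonzero entry of row $i$ sits in column $i+k \bmod d$). Any such matrix $M$ satisfies $MM^{T}=I_d$, since $(MM^{T})_{ab}=\sum_{c}M_{ac}M_{bc}$ receives a contribution only from the unique common column index, which occurs exactly when $a=b$ and then equals $(\pm1)^2=1$. Hence $P_k\in\Ort(d)$ for every $k$, independently of the chosen signs.

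Next I would compute, for $0\le j<k\le d-1$, the inner product $\tr(P_j^{T}P_k)=\sum_{i,l}(P_j)_{il}(P_k)_{il}$. The support of $P_j$ (the set of positions of its nonzero entries) is $\{(i,\,i+j \bmod d):0\le i\le d-1\}$, and similarly for $P_k$ with $k$ in place of $j$. A position $(i,l)$ lying in both supports would force $i+j\equiv i+k\pmod d$, i.e.\ $j\equiv k\pmod d$, impossible for distinct indices in $\{0,\dots,d-1\}$. So the two supports are disjoint, every term of the sum vanishes, and $\tr(P_j^{T}P_k)=0$. Equivalently, one can note that $P_j^{T}P_k$ is again a signed permutation matrix whose nonzero entries sit in positions $(i,\,i+k-j \bmod d)$, which are off-diagonal because $k-j\not\equiv0\pmod d$, hence $P_j^{T}P_k$ is traceless. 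Combining this with the previous paragraph shows that $\mathcal{P}$ is an order-$d$ $d$-OO matrix set.

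I do not expect any genuine obstacle here; the only point that needs a little care is ensuring that the supports of $P_j$ and $P_k$ are disjoint for $j\ne k$, which rests precisely on the fact that $k-j$ is never a nonzero multiple of $d$ when $j,k\in\{0,\dots,d-1\}$. (One could also phrase the argument group-theoretically for the unsigned choice: the $P_k$ are the powers $C^k$ of the cyclic shift $C$, forming a cyclic subgroup of $\Ort(d)$ of order $d$, and $\tr(C^{k-j})=0$ whenever $C^{k-j}\ne I_d$; but the support argument above already covers the signed case uniformly.)
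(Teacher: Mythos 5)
Your verification is correct and complete: each $P_k$ is a signed permutation matrix and hence orthogonal regardless of the sign choices, and the disjointness of the supports of $P_j$ and $P_k$ for $j\neq k$ (equivalently, the tracelessness of the signed permutation matrix $P_j^{T}P_k$, whose nonzero entries lie on the off-diagonal ``shift'' $k-j\not\equiv 0 \pmod d$) gives the Hilbert--Schmidt orthogonality. The paper states this proposition without any proof, so there is nothing to diverge from; your direct bookkeeping argument is exactly the intended one and supplies the omitted details.
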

		
		\section{$n$-OU  and $n$-OO matrix decomposition}
		\label{sec:ooude}
	In this section, we analyze the $n$-OU and $n$-OO matrix decomposition in subsection \ref{sec:res} and \ref{3ode}, respectively. In subsection \ref{subsec:weak}, we present two weaker forms of $n$-OO decomposition for order-three real matrices. Before proceeding, we propose the following preliminary lemma.
	\begin{lemma}
		\label{tran}
If a complex (resp. real) order-$d$ matrix $M$ has an $n$-OU (resp. $n$-OO) decomposition, then any order-$d$ matrix $N$, which has the same singular values as that of $M$, also has an $n$-OU (resp. $n$-OO) decomposition.
	\end{lemma}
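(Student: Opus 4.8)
The plan is to exploit the singular value decomposition together with the structure of the equivalence relation already built into Definitions~3 and~4. Suppose $M$ has an $n$-OU decomposition, say $M = \sum_{i=1}^n k_i U_i$ where $\{U_1,\dots,U_n\}$ is an $n$-OU matrix set and $k_i \in \bbC$ (or $k_i\in\bbR$; the scalars play no role in the argument). Let $N$ have the same singular values as $M$. Writing the singular value decompositions $M = W_1 \Sigma W_2^\dagger$ and $N = V_1 \Sigma V_2^\dagger$ with the \emph{same} diagonal matrix $\Sigma$ of singular values and with $W_1,W_2,V_1,V_2$ unitary, we get
\begin{eqnarray}
N = V_1 W_1^\dagger M W_2 V_2^\dagger = \sum_{i=1}^n k_i \big(V_1 W_1^\dagger U_i W_2 V_2^\dagger\big).
\end{eqnarray}
Set $A := V_1 W_1^\dagger$ and $B := W_2 V_2^\dagger$, both unitary, and $U_i' := A U_i B$. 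Then $N = \sum_{i=1}^n k_i U_i'$, and it remains only to check that $\{U_1',\dots,U_n'\}$ is again an $n$-OU matrix set.

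The verification is routine: each $U_i' = A U_i B$ is a product of unitaries, hence unitary; and for $i\neq j$,
\begin{eqnarray}
\tr\big((U_i')^\dagger U_j'\big) = \tr\big(B^\dagger U_i^\dagger A^\dagger A U_j B\big) = \tr\big(B^\dagger U_i^\dagger U_j B\big) = \tr\big(U_i^\dagger U_j\big) = 0,
\end{eqnarray}
using $A^\dagger A = I$ and cyclicity of the trace. So $\{U_i'\}$ is an $n$-OU set and $N$ has an $n$-OU decomposition. For the real case the argument is identical: when $M$ and $N$ are real with the same singular values one may take $W_1,W_2,V_1,V_2$ to be real orthogonal matrices in the SVD, so that $A=V_1W_1^T$ and $B=W_2V_2^T$ are real orthogonal, each $U_i' = AU_iB$ is real orthogonal, and $\tr((U_i')^T U_j') = \tr(U_i^T U_j) = 0$; the scalars $k_i$ remain real.

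There is essentially no obstacle here—the statement is a direct consequence of the fact that multiplication on the left and right by (real) unitaries preserves both unitarity/orthogonality and the Hilbert--Schmidt inner product, which is exactly the structure encoded in the definition of (orthogonal/unitary) equivalence of $n$-OU and $n$-OO matrix sets. The only point worth stating carefully is that one can choose the SVD factors to be real orthogonal in the real case, which is the standard real singular value decomposition. Indeed, the cleanest phrasing is: $M$ and $N$ having equal singular values means there are (real) unitaries with $N = AMB$, and then $\{AM_iB\}$ is an $n$-OU (resp.\ $n$-OO) set whenever $\{M_i\}$ is, with the same scalar coefficients.
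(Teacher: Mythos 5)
Your proof is correct and follows essentially the same route as the paper: both use the singular value decomposition to write $N = AMB$ for suitable (real) unitaries $A,B$ and then observe that conjugating the $n$-OU (resp.\ $n$-OO) set by $A$ and $B$ preserves unitarity/orthogonality and the Hilbert--Schmidt orthogonality. Your write-up is in fact slightly more careful than the paper's, since you verify the trace condition explicitly and note that the SVD factors can be chosen real orthogonal in the real case.
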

	\begin{proof}
Suppose the matrix $N$ has the same singular values as $M$, which are $\sigma_1,\cdots,\sigma_d$. Using SVD, there exist $V_1,W_1,V_2,W_2\in \Un(d)$ such that 
	\begin{eqnarray}
V_1NW_1=\diag(\sigma_1,\cdots,\sigma_d)=V_2MW_2.
\end{eqnarray}
 If $M$ has an $n$-OU decomposition in terms of the $n$-OU matrix set $\{U_1,\cdots, U_n\}$, the above identity implies that $N$ has an $n$-OU decomposition in terms of the $n$-OU matrix set $\{V_1^\dagger V_2U_1W_1W_2^\dagger,\cdots, V_1^\dagger V_2U_nW_1W_2^\dagger\}$. The argument holds similarly for the $n$-OO decomposition of real matrices.
	\end{proof}

	\subsection{The $n$-OU decomposition of complex matrices}	
	\label{sec:res}
		\begin{theorem}
			\label{le:nxnUnitary}
			(i)	Every order-$d$ matrix has an $d$-OU decomposition.
			
			(ii) There exist order-$d$ matrices that do not have any $(d-1)$-OU decomposition.
		\end{theorem}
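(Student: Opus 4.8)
The plan is to treat the two parts separately, with part (i) following from a reduction to diagonal matrices and part (ii) from a norm inequality.

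For (i), I would first invoke Lemma \ref{tran}: since possessing a $d$-OU decomposition is preserved among matrices with the same singular values, it suffices to decompose $\diag(\sigma_1,\dots,\sigma_d)$, where $\sigma_1,\dots,\sigma_d$ are the singular values of the given matrix. Then I would use the ``Fourier'' family of diagonal unitaries $D_k:=\sum_{j=0}^{d-1}\omega_d^{jk}\ket{j}\bra{j}$, $k=0,1,\dots,d-1$, with $\omega_d$ a primitive $d$th root of unity. Each $D_k$ is unitary, and $\tr(D_k^\dagger D_{k'})=\sum_{j=0}^{d-1}\omega_d^{j(k'-k)}=0$ for $k\ne k'$, so $\{D_0,\dots,D_{d-1}\}$ is a $d$-OU matrix set. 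Since the DFT matrix $(\omega_d^{jk})_{j,k}$ is invertible, $D_0,\dots,D_{d-1}$ form a basis of the space of diagonal matrices, hence $\diag(\sigma_1,\dots,\sigma_d)$ is a (complex) linear combination of them. This gives the desired $d$-OU decomposition. Equivalently, one can skip the reduction and write the singular value decomposition $M=V\diag(\sigma_1,\dots,\sigma_d)W$ with $V,W\in\Un(d)$; then $M=\sum_k c_k(VD_kW)$ and $\{VD_kW\}$ is again a $d$-OU set because conjugating both sides by unitaries preserves the Hilbert--Schmidt inner product.

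For (ii), the idea is that once only $d-1$ terms are allowed, the Hilbert--Schmidt norm and the operator norm of the decomposed matrix become incompatible for a rank-one matrix. Concretely I would take $M_0=\ket{0}\bra{0}$ and argue by contradiction: if $M_0=\sum_{i=1}^{d-1}c_iU_i$ with $\{U_i\}$ a $(d-1)$-OU set, then expanding $\tr(M_0^\dagger M_0)$ and using $\tr(U_i^\dagger U_j)=d\,\delta_{ij}$ yields the Bessel-type identity $1=\tr(M_0^\dagger M_0)=d\sum_{i}|c_i|^2$; on the other hand the operator norm obeys $1=\sigma_{\max}(M_0)=\bigl\|\sum_i c_iU_i\bigr\|\le\sum_{i=1}^{d-1}|c_i|\le\sqrt{d-1}\,\bigl(\sum_i|c_i|^2\bigr)^{1/2}=\sqrt{(d-1)/d}<1$, a contradiction. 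Hence $M_0$ admits no $(d-1)$-OU decomposition.

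I do not expect a genuine obstacle: the unitarity and orthogonality checks are immediate, the invertibility of the DFT matrix is standard, and the only real observation is the tension between the two norms exploited in (ii) — note that this same $M_0$ does satisfy (i), since $M_0=\frac1d\sum_{k=0}^{d-1}D_k$. If anything, the subtle point is applying Lemma \ref{tran} in the correct direction (from the diagonal representative back to the general matrix), which is exactly what its statement provides.
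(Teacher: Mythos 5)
Your proof is correct. Part (i) is essentially the paper's own argument: reduce to a diagonal matrix of singular values via Lemma \ref{tran} (or directly via SVD) and decompose it over the diagonal ``Fourier'' unitaries, which is exactly the set $\{Z_1,\dots,Z_d\}$ built from the DFT matrix $K$ in the paper; your explicit insistence that $\omega_d$ be a \emph{primitive} root is in fact slightly more careful than the paper's ``$z^d=1$, $z\ne 1$''. Part (ii) uses the same witness $\diag(1,0,\dots,0)$ but reaches the contradiction by a different mechanism: the paper pins down the coefficients as $k_i=\bar{u_i}/d$ by tracing against $U_i^\dagger$ and then compares the $(1,1)$ entry, getting $1=\frac{1}{d}\sum_i\abs{u_i}^2\le\frac{d-1}{d}$, whereas you combine the Hilbert--Schmidt identity $\sum_i\abs{c_i}^2=1/d$ with the operator-norm triangle inequality and Cauchy--Schwarz to get $1\le\sqrt{(d-1)/d}$. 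Your version buys a cleaner, reusable necessary condition --- any $M$ with an $n$-OU decomposition in $\Un(d)$ must satisfy $\sigma_{\max}(M)\le\sqrt{n/d}\,\bigl(\tr(M^\dagger M)\bigr)^{1/2}$ --- while the paper's version is more elementary in that it only inspects a single matrix entry. Both are complete; there is no gap.
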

		\begin{proof}
			(i)	Using SVD and Lemma \ref{tran}, we only need to show that any diagonal positive semidefinite matrix has a $d$-OU decomposition.
			 Consider the order-$d$ complex matrix $K=\bma 1&1&1&\cdots&1\\1&z&z^2&\cdots&z^{d-1}\\1&z^2&z^4&\cdots&z^{2(d-1)}\\ \vdots&\vdots&\vdots&\ddots&\vdots\\1&z^{d-1}&z^{2(d-1)}&\cdots&z^{(d-1)^2}\ema$, where $z^d=1$ and $z\ne1$. One can verify that each entry of $K$ has modulus one and the column vectors of $K$ are mutually orthogonal.  
			Let $Z_i$ be the order-$d$ diagonal matrix whose diagonal entries correspond uniquely to the elements of the  $i$-th column vector of $K$. We obtain that $\{Z_1,\cdots,Z_d\}$ is a $d$-OU matrix set. Consequently, any order-$d$ diagonal matrix is in the linear spanning of $Z_1,\cdots,Z_d$. Thus the claim is proven.

			(ii) We construct such an example of order-$d$ matrix. Suppose  
			\begin{eqnarray}
				\label{eq:A=k1O1}
				\label{A}
				\diag(1,0,\cdots,0)=k_1U_1+k_2U_2\cdots+k_{d-1}U_{d-1},
			\end{eqnarray}
			where $k_1,\cdots,k_{d-1}$ are complex numbers and $\{U_1,\cdots,U_{d-1}\}$ is a $(d-1)$-OU matrix set. Denote $u_{i}$ as the first diagonal entry of $U_i$. By left-multiplying the matrix $U_i^\dagger$ on both sides of Eq. \eqref{A} and taking trace, we obtain that
			\begin{eqnarray}
			k_i=\frac{1}{d}\bar{u_{i}}
			\end{eqnarray}
	holds for any $1\le i\le d-1$. Taking $k_i$ to Eq.\eqref{A} and comparing the first diagonal entry of two sides, we have
			\begin{eqnarray}
				1=k_1u_{1}+\cdots+k_{d-1}u_{d-1}=\frac{1}{d}|u_{1}|^2+\cdots+\frac{1}{d}|u_{d-1}|^2.
			\end{eqnarray}
			This is impossible because $|u_{1}|^2,\cdots,|u_{d-1}|^2\le 1$ due to that $U_i$ is unitary. So the decomposition in \eqref{eq:A=k1O1}
			does not exist. 
		\end{proof}

		\subsection{The $n$-OO decomposition of order-three real matrices}
		\label{3ode}

	In this subsection,	we consider the $n$-OO decomposition for order-three real matrices. Different from Theorem \ref{le:nxnUnitary}, we propose the following result.
		
		\begin{theorem}
			\label{condition}
Suppose $M$ is an order-three real matrix whose singular values are $x,y,z$. 
		
			(i) If two of $x,y,z$ are equal, w.l.o.g., $x=z$, further, $2x\ge y$, then $M$ has a 2-OO decomposition.
			
			(ii)  The matrix $M$ has a 3-OO decomposition if and only if $x,y,z$ satisfy one of the following two conditions,
			
		\quad	(a) at least two of $x,y,z$ are equal;
			
		\quad	(b) There exists a real solution for the following equations in terms of $l_1,l_2,l_3$,
			\begin{eqnarray}
				\label{l1}
				-l_1-l_2-l_3&=&-x'-y'-z',\\\label{l2}
				l_1l_2+l_1l_3+l_2l_3-l_1^2 - l_2^2- l_3^2&=&x'y'+x'z'+y'z',\\
				\label{l3}
				l_1^3+l_2^3+l_3^3+l_1l_2l_3&=&-x'y'z',
			\end{eqnarray}
			where $x'\in\{-x,x\}$, $y'\in\{-y,y\}$ and $z'\in\{-z,z\}$.

			(iii) The matrix $M$ has a 4-OO decomposition if and only if there exists a real solution for the following equations in terms of $l_1,l_2,l_3,l_4$,
			\begin{eqnarray}
				\label{l5}
				&&\alpha=-(x^2+y^2+z^2),\\
				\label{l6}
				&&\beta=x^2y^2+x^2z^2+y^2z^2,\\
				\label{l7}
				&&\gamma=-x^2y^2z^2,
			\end{eqnarray}
			where 
			\begin{eqnarray}
							\notag
							\label{alpha}
							\alpha&&=-3\sum_{i=1}^4 l_i^2,\\
							\notag
							\beta&&=3\sum_{i,j=1}^4 l_i^2l_j^2 +
	2l_1^2(l_2l_3+l_2l_4+l_3l_4)+2l_2^2(l_1l_3+l_1l_4+l_3l_4)\\
		                    \notag
	&&+2l_3^2(l_1l_2+l_1l_4+l_2l_4)+2l_4^2(l_1l_2+l_1l_3+l_2l_3)
							-6l_1l_2l_3l_4, \\
							\label{gamma}
							\gamma&&=-(\sum_{i=1}^4 l_i^3 + l_1 l_3 l_4 + l_2 l_3 l_4  + l_1 l_2 l_3 + l_1 l_2 l_4)^2.
						\end{eqnarray}
	\end{theorem}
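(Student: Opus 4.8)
The plan is to reduce everything to a computation about the possible linear combinations of matrices in an order-three $n$-OO matrix set, using the structural classification already obtained. By Lemma \ref{tran} we may assume $M=\diag(x,y,z)$ with $x,y,z\ge 0$; the decomposition of $M$ as a linear combination of an $n$-OO matrix set is then, by the classification of Theorems \ref{2ge} and \ref{unt} (via Corollary \ref{3onew}), a question about linear combinations of a fixed representative set. For part (i), when $x=z$ and $2x\ge y$, I would exhibit the decomposition explicitly: write $M$ in the basis $\{I_3,\O_1\}$ after a suitable orthogonal change of coordinates. The point is that $\{I_3,\O_1\}$ spans a $2$-dimensional real space of matrices, and the singular values of $aI_3+b\O_1$ can be computed directly (one singular value is $|a+b|$, the other two come from the $2\times 2$ block $aI_2+bR_{120^\circ}$, which has repeated singular value $\sqrt{a^2+b^2-ab}$). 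So $aI_3+b\O_1$ realizes exactly the diagonal matrices with a repeated singular value, subject to the inequality between the lone and repeated value; the condition $2x\ge y$ is precisely the constraint that makes this system solvable over $\bbR$. I would verify the inequality/equality bookkeeping carefully, since that is where the $2x\ge y$ hypothesis enters.

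For part (ii), the ``if'' direction of (a) follows from (i) together with the observation that when all three singular values coincide the matrix is a scalar times an orthogonal matrix, hence trivially has a $3$-OO (indeed $2$-OO) decomposition; and the ``if'' direction of (b) is definitional once one sets up the right parametrization. The substantive content is to show that a general $3$-OO decomposition of $\diag(x,y,z)$ reduces to the system \eqref{l1}--\eqref{l3}. Here I would use Corollary \ref{3onew}: every order-three $3$-OO set is orthogonally equivalent to $\mathcal{G}_1$ or $\mathcal{G}_2$, so $M$ has a $3$-OO decomposition iff some matrix with singular values $x,y,z$ (equivalently, iff $\diag(\pm x,\pm y,\pm z)$ for some choice of signs, after absorbing the two outer orthogonal factors — this is where the $x',y',z'\in\{\pm x,\pm y,\pm z\}$ comes from) lies in the real span of $\mathcal{G}_1$ or of $\mathcal{G}_2$. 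I would compute $l_1G_1+l_2G_2+l_3G_3'$ (resp. with $G_3$) and extract its invariants. The symmetric functions of the eigenvalues of $(l_1G_1+l_2G_2+l_3G_3)^T(l_1G_1+l_2G_2+l_3G_3)$ — or, more cleverly, the characteristic polynomial of the matrix itself, since $\mathcal{G}_2$ consists of symmetric matrices so its eigenvalues are real and the singular values are their absolute values — give exactly the three equations \eqref{l1}--\eqref{l3}, with the left-hand sides being the elementary-symmetric-type expressions in $l_1,l_2,l_3$ and the right-hand sides the corresponding expressions in $x',y',z'$. One must also check that the span of $\mathcal{G}_1$ contributes nothing new beyond case (a), i.e. that the symmetric $\mathcal{G}_1$-span only produces matrices with a repeated singular value; I expect this to follow from the analogous characteristic-polynomial computation collapsing.

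For part (iii), the argument is parallel but uses Theorem \ref{unt}(ii): every order-three $4$-OO set is orthogonally equivalent to $\mathcal{G}=\{G_1,G_2,G_3,G_4\}$, so $M$ has a $4$-OO decomposition iff $\diag(\pm x,\pm y,\pm z)$ lies in the real span of $\mathcal{G}$ for some signs. Since $G_4$ is also symmetric (as are $G_1,G_2,G_3$), the matrix $N:=\sum_{i=1}^4 l_iG_i$ is symmetric, its eigenvalues are real, and the squared singular values of $M$ are the eigenvalues of $N^2$; hence I would compute the characteristic polynomial of $N$, namely $\lambda^3 + p_1\lambda^2 + p_2\lambda + p_3$ with $p_1,p_2,p_3$ polynomials in $l_1,\dots,l_4$, then form the polynomial whose roots are the squares, whose coefficients are $\alpha=-(\text{sum of squares of eigenvalues})$, $\beta$, $\gamma$. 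Matching these to $-(x^2+y^2+z^2)$, $x^2y^2+x^2z^2+y^2z^2$, $-x^2y^2z^2$ yields \eqref{l5}--\eqref{l7}; one then checks by direct expansion that $p_1^2-2p_2$, etc., give the stated $\alpha,\beta,\gamma$ in \eqref{alpha}--\eqref{gamma} (note $\gamma=-p_3^2$, consistent with the displayed formula being a perfect square). The sign choices $\pm x,\pm y,\pm z$ disappear here because only the squares appear, which is why (iii) has no analogue of the $x',y',z'$ bookkeeping.

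The main obstacle I anticipate is purely computational bookkeeping: verifying that the characteristic-polynomial coefficients of $l_1G_1+l_2G_2+l_3G_3$ (and of the four-term sum) expand to precisely the left-hand sides written in \eqref{l2}, \eqref{l3} and \eqref{beta}, \eqref{gamma}, and — more delicately — confirming in part (ii) that the $\mathcal{G}_1$-span genuinely adds no cases beyond (a), so that the two representative sets of Corollary \ref{3onew} together produce exactly conditions (a) and (b) and nothing else. The "only if" directions all rest on the completeness of the classification in Section \ref{sec:nOO+nOU}, which I am entitled to assume; the real work is making the symmetric-function identities come out exactly as stated.
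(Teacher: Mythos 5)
Your overall architecture --- reduce to $\diag(x,y,z)$ via Lemma \ref{tran}, then use the classification of Theorems \ref{2ge} and \ref{unt} to turn the question into one about the real spans of the representative sets --- is exactly the paper's, and your parts (i) and (ii) are essentially its proof: the discriminant computation in (i) is precisely where $2x\ge y$ enters, and in (ii) the symmetric matrix $l_1G_1+l_2G_2+l_3G_3$ does yield \eqref{l1}--\eqref{l3} via its characteristic polynomial, while the $\mathcal{G}_1$-span collapses to a repeated singular value as you anticipate. One small but real gap in (ii): you derive the sufficiency of condition (a) from part (i), but (i) only applies when the lone singular value $y$ satisfies $2x\ge y$; for singular values $(x,x,y)$ with $2x<y$ no $2$-OO decomposition exists at all, and you must instead use the full three-parameter span of $\mathcal{G}_1$ (equivalently of $\{I_3,\O_1,\O_1^T\}$: the identity $\diag(y,x,x)=\frac{y+2x}{3}I_3+\frac{y-x}{3}\O_1+\frac{y-x}{3}\O_1^T$ settles every case of (a)).

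The serious flaw is in part (iii), which rests on a false premise: $G_4$ is \emph{not} symmetric --- its $(1,2)$ entry is $\frac{-1-\sqrt 5}{4}$ while its $(2,1)$ entry is $\frac{-1+\sqrt 5}{4}$ (in fact $G_4^T=G_4'$). Hence $N:=\sum_{i=1}^4 l_iG_i$ is not symmetric, its eigenvalues can be non-real, and the eigenvalues of $N^TN$ are \emph{not} the squares of the eigenvalues of $N$. Your recipe computes $\alpha$ as minus the sum of the squared eigenvalues of $N$, i.e. $-\tr(N^2)$, whereas the correct quantity is $-\tr(N^TN)=-3\sum_i l_i^2$; already for $N=G_4$ (eigenvalues $-1,e^{\pm i\theta}$ with $\cos\theta=-\frac14$) these are $\frac34$ and $-3$ respectively, so the proposed derivation of the stated $\alpha$ and $\beta$ produces wrong formulas. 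Only $\gamma=-(\det N)^2$ survives, since $\det(N^TN)=(\det N)^2$ holds regardless of symmetry. The repair is the route you already name as an alternative in part (ii) and the one the paper takes: compute the characteristic polynomial of $N^TN$ directly and match its coefficients against the elementary symmetric functions of $x^2,y^2,z^2$; your observation that the $\pm$ sign bookkeeping disappears in (iii) because only squares occur is then correct.
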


		\begin{proof}
			(i) Using SVD and Lemma \ref{tran}, the claim is proved by the following identity:
			\begin{eqnarray}
				\label{2o}
				\notag
				\bma x&0&0\\0&x&0\\0&0&y \ema
				&&=\frac{1}{6} \left(3y + \sqrt{12x^2 - 3y^2}\right)\begin{bmatrix}
					\frac{y + \sqrt{12x^2 - 3y^2}}{4x} & \frac{-\sqrt{3}y + \sqrt{4x^2 - y^2}}{4x} & 0 \\
					\frac{\sqrt{3}y - \sqrt{4x^2 - y^2}}{4x} & \frac{y + \sqrt{12x^2 - 3y^2}}{4x} & 0 \\
					0 & 0 & 1
				\end{bmatrix}\\
				&&+\frac{1}{6} \left(-3y + \sqrt{12x^2 - 3y^2}\right)
				\begin{bmatrix}
					\frac{-y + \sqrt{12x^2 - 3y^2}}{4x} & -\frac{\sqrt{3}y + \sqrt{4x^2 - y^2}}{4x} & 0 \\
					\frac{\sqrt{3}y + \sqrt{4x^2 - y^2}}{4x} & \frac{-y + \sqrt{12x^2 - 3y^2}}{4x} & 0 \\
					0 & 0 & -1
				\end{bmatrix},
			\end{eqnarray}
	where the two matrices on the rightside consist a 2-OO matrix set.
			
		(ii) Suppose $M$ has a 3-OO decomposition. Using Lemma \ref{s} and Corollary \ref{3onew},
		there exist $U,V\in \Ort(3)$ and real numbers $l_1,l_2,l_3$ such that $UMV=l_1G_1+l_2G_2+l_3G_3$ or $UMV=l_1G_1+l_2G_2+l_3G_3'$ where $G_1,G_2,G_3,G_3'$ are defined in Corollary \ref{3onew}.
		This implies that the singular values of either $L_a:=\bma l_1&l_3&l_2\\l_3&l_2&l_1\\l_2&l_1&l_3\ema$ or $L_b:=\bma l_1&-l_3&l_2\\-l_3&l_2&l_1\\l_2&l_1&l_3\ema$ are $x,y,z$. Note that $L_a$ and $L_b$ are real symmetric matrices, the singular values of $L_a$ (resp. $L_b$) are the absolute values of the  eigenvalues of $L_a$ (resp. $L_b$). By a direction calculation, the eigenvalue set of $L_a$ is 
		\begin{eqnarray}
			\label{det}
			\notag
		\{l_1+l_2+l_3,\sqrt{\frac{1}{2}(l_1-l_2)^2+\frac{1}{2}(l_1-l_3)^2+\frac{1}{2}(l_2-l_3)^2},\\-\sqrt{\frac{1}{2}(l_1-l_2)^2+\frac{1}{2}(l_1-l_3)^2+\frac{1}{2}(l_2-l_3)^2}\}.
		\end{eqnarray}
	Hence if  $L_a$ has singular values $x,y,z$, then at least two of $x,y,z$ are equal. On the other hand, suppose $L_b$ has singular values $x,y,z$. The eigenfunction of $L_b$ is
	\begin{eqnarray}
		\label{det}
		\notag
		&&\det(\lambda I_3-L_b)
		\\ \notag
		=&&\lambda^3-(l_1+l_2+l_3)\lambda^2+(l_1l_2+l_1l_3+l_2l_3-l_1^2 - l_2^2- l_3^2 )\lambda+l_1^3+l_2^3+l_3^3+l_1l_2l_3,\\
	\end{eqnarray}
where the three zeroes are  $x'\in \{x,-x\}$, $y'\in \{y,-y\}$ and $z'\in \{z,-z\}$. According to the relationship between roots and coefficients of (\ref{det}), we obtain that $\{x,y,z\}$ must satisfy the condition (b).
	
	Conversely, suppose $x,y,z$ satisfy condition (a) or (b). Then there exist real numbers $l_1,l_2,l_3$ such that the singular values of $L_a$ or $L_b$ are $x,y,z$. Since both $L_a$ and $L_b$ have a 3-OO decomposition, using Lemma \ref{tran}, we obtain that $M$ also has a 3-OO decomposition.
	
		(iii) The proof is similar to (ii). Suppose  $M$ has a 4-OO decomposition. By Theorem \ref{unt} (ii), there exist $U,V\in \Ort(3)$ and real numbers $l_1,l_2,l_3,l_4$ such that $UAV=l_1G_1+l_2G_2+l_3G_3+l_4G_4$ where $G_1,G_2,G_3,G_4$ are defined in Corollary \ref{3onew} and Theorem \ref{unt}. Hence the singular values of $S:=l_1G_1+l_2G_2+l_3G_3+l_4G_4$ are $x,y,z$ and thus the eigenvalues of $S^TS$ are $x^2,y^2,z^2$. By calculation, the  eigenfunction of $S^TS$ is:
	\begin{eqnarray}
		\label{det1}
		\det(\lambda I-S^TS)
		=\lambda^3+\alpha\lambda^2+\beta\lambda+\gamma,
	\end{eqnarray}
	where $\alpha,\beta,\gamma$ are in (\ref{alpha}).
   According to the relationship between roots and coefficients of (\ref{det1}), we obtain that the equations (\ref{l5})-(\ref{l7}) must have a real solution.

Conversely, suppose there exists a real solution of (\ref{l5})-(\ref{l7}). Then there exist real numbers $l_1,l_2,l_3,l_4$ such that the singular values of $S$ are $x,y,z$. Since $S$ has a 4-OO decomposition in terms of $G_1,G_2,G_3,G_4$, using Lemma \ref{tran}, we obtain that $M$ has a 4-OO decomposition.
\end{proof}

		We next propose some examples to illustrate the above theorem.
		
			{\bf Example 1:}
		Suppose at least two of the singular values of $M$ are equal. Firstly, 
	use SVD to obtain $U,V\in \Ort(3)$ such that $UMV=\diag(y,x,x)$.
	The 3-OO decomposition of $M$ is 	
	\begin{eqnarray}
	M=U\bma y&0&0\\0&x&0\\0&0&x \ema V=	U(\frac{y+2x}{3}I_3+\frac{y-x}{3}\O_1+\frac{y-x}{3}\O_1^T)V,
	\end{eqnarray}	
	where $\{I_3,\O_1,\O_1^T\}$ is the 3-OO matrix set defined in (\ref{c1}).

		{\bf Example 2:}
			Suppose the singular values of $M$ are $1,3,3-\sqrt{2}$. For the equations (\ref{l1})-(\ref{l3}), if we take $x'=1,y'=3,z'=-(3-\sqrt{2})$, then there is a real solution: $l_1=l_2=\sqrt{2},l_3=1-\sqrt{2}$.
		Hence $M$ has a 3-OO decomposition.
		To obtain this decomposition, we first use SVD  to find $U_1,V_1,U_2,V_2\in \Ort(3)$ such that $U_1MV_1=\diag(1,3,3-\sqrt{2})=U_2\left(\sqrt{2}G_1+\sqrt{2}G_2+(1-\sqrt{2})G_3\right)V_2$, where $G_1,G_2,G_3$ are defined in Corollary \ref{3onew}. Then $M=U_1^\dagger U_2\left(\sqrt{2}G_1+\sqrt{2}G_2+(1-\sqrt{2})G_3\right)V_2V_1^\dagger$.
	
		{\bf Example 3:}
		Suppose the singular values of $M$ are $0,1,\sqrt{6}$. By calculation, the equations (\ref{l1})-(\ref{l3}) do not have any real solution for 
		all cases of $x',y',z'$. Hence $M$ does not have any 3-OO decomposition. Further, there is a real solution for the equations (\ref{l5})-(\ref{l7}): $l_1=-0.589222,l_2=0.835328,l_3=0.720831,l_4=-0.876801$. This implies that $M$ has a 4-OO decomposition. The method of obtaining this decomposition is similar to that of Example 2.

		{\bf Example 4:}
		Suppose the singular values of $M$ are $0,1,\sqrt{8}$. By calculation,
		the equations (\ref{l1})-(\ref{l3}) do not have any real solution for 
		all cases of $x',y',z'$. Hence $M$ does not have any 3-OO decomposition.
		Further, the equations (\ref{l5})-(\ref{l7}) also do not have any real solutions. This implies that $M$ does not have any $4$-OO decomposition. Hence $M$ does not have any $n$-OO decomposition.
		\subsection{Two weaker forms of order-three $n$-OO decomposition}
		\label{subsec:weak}
	\begin{lemma}
			\label{le:weak}
			(i) Every  order-three real matrix can be written as the real linear combination of $O_1,O_2,O_3$, where $O_1,O_2,O_3\in \Ort(3)$ and $\tr(O_1^TO_2)=0$. 
			
			(ii) Every order-three real matrix can be written as the real linear combination of of $O_1,O_2,O_3$, where $O_1,O_2\in \Ort(3)$ and $\tr(O_i^TO_j)=0$ for any $i\neq j$. 
		\end{lemma}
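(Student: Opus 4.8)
The plan is to treat the two parts separately, since (ii) is a short Hilbert--Schmidt projection argument whereas (i) needs an explicit construction. For (i) I would first reduce to the diagonal case: if $M=U\Sigma V^{T}$ is a singular value decomposition with $\Sigma=\diag(x,y,z)$, $x\ge y\ge z\ge 0$, and $\Sigma=aO_{1}+bO_{2}+cO_{3}$ with $O_{i}\in\Ort(3)$ and $\tr(O_{1}^{T}O_{2})=0$, then $M=a(UO_{1}V^{T})+b(UO_{2}V^{T})+c(UO_{3}V^{T})$, the three factors are in $\Ort(3)$, and $\tr\big((UO_{1}V^{T})^{T}(UO_{2}V^{T})\big)=\tr(O_{1}^{T}O_{2})=0$, exactly as in the proof of Lemma~\ref{tran}. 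Moreover, if two of $x,y,z$ coincide then Theorem~\ref{condition}(ii)(a) already gives a full $3$-OO decomposition of $M$, which in particular has the form demanded in (i); so I may assume $x>y>z\ge 0$.

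For the diagonal case I would use the $2$-OO set $\{I_{3},\O_{1}\}$, where $\O_{1}=1\op R$ with $R\in\SO(2)$ the rotation by $2\pi/3$. A direct computation shows that $aI_{3}+b\O_{1}$ has singular values $|a+b|$ and $\sqrt{a^{2}-ab+b^{2}}$ (the latter twice); conversely, given any $3\times 3$ matrix $N$ with exactly these singular values, comparing singular value decompositions yields $P,Q\in\Ort(3)$ with $N=P(aI_{3}+b\O_{1})Q$, so that $N=aO_{1}+bO_{2}$ with $O_{1}:=PQ$, $O_{2}:=P\O_{1}Q\in\Ort(3)$ and $\tr(O_{1}^{T}O_{2})=\tr\O_{1}=0$. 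I then put $c:=\tfrac{x+z}{2}$ and $O_{3}:=I_{3}$, so that $N:=\Sigma-cI_{3}=\diag\!\big(\tfrac{x-z}{2},\,y-\tfrac{x+z}{2},\,-\tfrac{x-z}{2}\big)$ has singular values $t:=\tfrac{x-z}{2}$ (twice) and $s:=|y-\tfrac{x+z}{2}|$, with $s\le t$ because $z\le y\le x$. Choosing reals $a,b$ with $a+b=s$ and $ab=\tfrac{s^{2}-t^{2}}{3}$ (real since the discriminant $\tfrac{4t^{2}-s^{2}}{3}\ge 0$) makes $aI_{3}+b\O_{1}$ have singular values $\{s,t,t\}$, i.e.\ those of $N$; applying the previous observation to $N$ then gives $\Sigma=aO_{1}+bO_{2}+cI_{3}$ of the required form. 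I expect the only delicate step to be this choice of shift: it is precisely the midpoint of the extreme singular values that forces $s\le t$, which is in turn exactly what a combination $aO_{1}+bO_{2}$ with $\tr(O_{1}^{T}O_{2})=0$ can realize (such combinations always have two equal singular values, by Theorem~\ref{2ge}(i), so a naive shift such as $\tfrac{x+y}{2}$ need not work).

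For (ii), since $O_{3}$ is not required to be orthogonal, I would fix $O_{1}:=I_{3}$ and $O_{2}:=\O_{1}$, which satisfy $\tr(O_{1}^{T}O_{2})=0$, put $a:=\tfrac13\tr M$ and $b:=\tfrac13\tr(\O_{1}^{T}M)$, and set $O_{3}:=M-aI_{3}-b\O_{1}$. Then $\tr O_{3}=\tr M-3a=0$ and $\tr(\O_{1}^{T}O_{3})=\tr(\O_{1}^{T}M)-3b=0$ (using $\tr\O_{1}=0$ and $\tr(\O_{1}^{T}\O_{1})=3$), so $M=aO_{1}+bO_{2}+O_{3}$ with all three pairwise Hilbert--Schmidt orthogonal and $O_{1},O_{2}\in\Ort(3)$; if this $O_{3}$ is $0$, i.e.\ $M\in\lin\{I_{3},\O_{1}\}$, any $O_{3}$ in the seven-dimensional orthocomplement of $\lin\{I_{3},\O_{1}\}$ does the job with coefficient $0$. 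This part needs no further ideas; all the content of the lemma is in (i).
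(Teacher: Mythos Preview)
Your proposal is correct in both parts, and the overall strategy for (i) matches the paper's: reduce to a diagonal matrix by SVD, subtract a scalar multiple of a fixed orthogonal matrix so that the remainder has two equal singular values bounded below by half the third, and then invoke the explicit $2$-OO decomposition of Theorem~\ref{condition}(i). The only difference is the choice of that fixed orthogonal matrix: the paper subtracts $\tfrac{b-c}{2}\,\diag(1,-1,1)$ from $\diag(a,b,c)$ (with $a\le b\le c$) to obtain $\diag\!\big(a+\tfrac{b-c}{2},\tfrac{b+c}{2},\tfrac{b+c}{2}\big)$, whereas you subtract $\tfrac{x+z}{2}\,I_{3}$. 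Both shifts produce a matrix with a repeated singular value dominating half the remaining one, so the $2$-OO step applies in either case; your midpoint choice is arguably the more natural one, and your remark that any $aO_{1}+bO_{2}$ with $\tr(O_{1}^{T}O_{2})=0$ must have a repeated singular value (via Theorem~\ref{2ge}(i)) is a nice explanation of why some such shift is forced.

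For (ii) your argument is genuinely different and shorter than the paper's. The paper again reduces to the diagonal case and then exhibits an explicit identity writing $\diag(a,b,c)$ as a specific $2$-OO combination plus a diagonal matrix orthogonal to both summands. Your Hilbert--Schmidt projection onto $\lin\{I_{3},\O_{1}\}$ bypasses both the SVD reduction and the explicit formula: since $O_{3}$ is not required to lie in $\Ort(3)$, orthogonality of the residual $M-aI_{3}-b\O_{1}$ to $I_{3}$ and $\O_{1}$ is automatic from the choice of $a,b$. This is the cleaner route; the paper's construction has the minor advantage of making all three summands concrete, but that is not needed for the statement.
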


		\begin{proof}
		Similar to the proof of Lemma \ref{tran},	we only need to prove that the two claims hold for any positive semidefinite diagonal matrix.
			
			(i) Assume that $M=\bma a&0&0\\0&b&0\\0&0&c \ema$, where $0\le a\le b\le c$ and let 
			\begin{eqnarray}
				N=M+\frac{b-c}{2}\bma 1&0&0\\0&-1&0\\0&0&1 \ema=\bma a+\frac{b-c}{2}&0&0\\0&\frac{b+c}{2}&0\\0&0&\frac{b+c}{2} \ema.
			\end{eqnarray}
			We have $|a+\frac{b-c}{2}|\le a+|\frac{b-c}{2}|\le a+\frac{b+c}{2}\le b+c$. Recall from (\ref{2o}), we obtain that there exists a 2-OO matrix set $\{O_1,O_2\}$ such that $N=k_1O_1+k_2O_2$. Hence $M=k_1O_1+k_2O_2+\frac{(c-b)}{2}\bma 1&0&0\\0&-1&0\\0&0&1 \ema$. So assertion (i) holds.
			
			(ii) The assertion is proved by the following identity:
			\begin{eqnarray}
				\notag
				\bma a&0&0\\0&b&0\\0&0&c \ema=&&\frac{a+b-2c}{6}(\bma \frac{1}{2}&-\frac{\sqrt{3}}{2}&0\\ \frac{\sqrt{3}}{2}&\frac{1}{2}&0\\0&0&-1 \ema+\bma \frac{1}{2}&\frac{\sqrt{3}}{2}&0\\ -\frac{\sqrt{3}}{2}&\frac{1}{2}&0\\0&0&-1 \ema)\\+&&\bma \frac{5a-b+2c}{6}&0&0\\0&\frac{-a+5b+2c}{6}&0\\0&0&\frac{a+b+c}{3} \ema.
			\end{eqnarray}
			One can verify that the first two matrices on the rightside is a 2-OO matrix set and they are both orthogonal to the third matrix.
		\end{proof}
		
		\section{Applications}
		\label{sec:app}
		In this section, we apply our result in quantum information theory. In subsection \ref{RUM}, we give a characterization of UMEB in a real two-qutrit system. In subsection \ref{app2}, we apply the result of $n$-OU decomposition to entanglement theory, showing that a bipartite pure state can be prepared by the superposition of maximally entangled states with local operations on one of the systems.
		\subsection{Unextendible maximally entangled state bases in $\bbR^3\otimes \bbR^3$}
		\label{RUM}
	We begin with the following definitions.
		\begin{definition}
			A set of bipartite pure states $\{\ket{\psi_i}\}_{i=1}^n\in \bbR^d\otimes \bbR^d$ is called an $n$-number RUMEB if and only if 
			
			(i) all states $\ket{\psi_i}$ are maximally entangled,
			
			(ii) $\langle \psi_i| \psi_j \rangle=\delta_{i,j}$,
			
			(iii) if $\ket{\phi}\in \bbR^d \otimes \bbR^d$ satisfies $\langle \phi | \psi_i \rangle=0$ for all $i=1,\cdots,n$, then $\ket{\phi}$ cannot be a maximally entangled state.
		\end{definition}

		\begin{definition}
			Two sets of RMUEB $\{\ket{\psi_i}\}_{i=1}^n$ and $\{\ket{\phi_i}\}_{i=1}^n$ are called orthogonally equivalent, if there exist real orthogonal matrices $U$ and $V$ such that $U\otimes V\ket{\psi_i}=(-1)^{a_i} \ket{\phi_{\pi(i)}}$ for $i=1,\cdots,n$, $a_i\in \{0,1\}$, and $\pi(i)$ is a permutation of $1,\cdots,n$.
		\end{definition}
		
		Note that each bipartite state $\ket{\psi}$ in $\bbR^d\otimes \bbR^d$ can be uniquely associated with an order-$d$ real matrix $M$ through the identity $\ket{\psi}=(I_d\otimes M)\ket{\Omega_d}$ where   $\ket{\Omega_d}=\frac{1}{\sqrt{d}}\sum_{j=0}^{d-1} \ket{jj}$. 
		Hence the following lemma can be proven directly.
		
		\begin{lemma}
			\label{state}
			(i) A set of bipartite pure states $\{\ket{\psi_i}:=(I_d\otimes O_i)\ket{\Omega_d}\}_{i=1}^n$ is an $n$-number RUMEB if and only if 
			
		\quad	(a) $O_1,\cdots,O(n)\in \Ort(d)$,
			
		\quad	(b) $Tr(O_i^TO_j)=d\delta_{i,j}$,
			
		\quad	(c) there does not exist any $O\in \Ort(d)$ such that $\tr(O^TO_i)=0$ for $i=1 ,\cdots,n$.
			
			(ii) Two sets of  RMUEB $\{\ket{\psi_i}:=(I_d\otimes O_i)\ket{\Omega_d} \}_{i=1}^n$ and $\{\ket{\phi_i}:=(I_d\otimes O_i')\ket{\Omega_d}\}_{i=1}^n$ are orthogonally equivalent iff there exist $U,V\in \Ort(d)$ such that $UO_iV=(-1)^{a_i}O_{\pi(i)}'$ for $i=1,\cdots,n$, $a_i\in \{0,1\}$, and $\pi(i)$ is a permutation of $1,\cdots,n$.
		\end{lemma}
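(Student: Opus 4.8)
The plan is to transport every clause across the linear bijection $M\mapsto\ket{\psi_M}:=(I_d\otimes M)\ket{\Omega_d}$ between order-$d$ real matrices and $\bbR^d\otimes\bbR^d$ recalled just above the lemma, matching each condition in the definition of RUMEB with the corresponding matrix condition. First I would record, by a one-line computation with $\ket{\Omega_d}=\frac1{\sqrt d}\sum_j\ket{jj}$, three elementary facts: (1) $\braket{\psi_M}{\psi_N}=\frac1d\tr(M^TN)$ for real $M,N$, so $\ket{\psi_M}$ is a unit vector iff $\tr(M^TM)=d$; (2) the reduced density operators of $\ket{\psi_M}$ are $\frac1d M^TM$ and $\frac1d MM^T$, hence a unit vector $\ket{\psi_M}$ is maximally entangled iff $M\in\Ort(d)$; and (3) the ricochet identity $(U\otimes V)(I_d\otimes M)\ket{\Omega_d}=(I_d\otimes VMU^T)\ket{\Omega_d}$, i.e. $(U\otimes V)\ket{\psi_M}=\ket{\psi_{VMU^T}}$.

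For part (i) these yield the equivalences immediately. A family $\{\ket{\psi_i}=\ket{\psi_{O_i}}\}_{i=1}^n$ consists of unit maximally entangled states exactly when each $O_i\in\Ort(d)$, which is (a); then orthonormality $\braket{\psi_i}{\psi_j}=\delta_{ij}$ becomes $\tr(O_i^TO_j)=d\delta_{ij}$ by fact (1), which is (b) (the diagonal part being automatic from (a)); and since $M\mapsto\ket{\psi_M}$ is onto $\bbR^d\otimes\bbR^d$ and carries precisely $\Ort(d)$ onto the maximally entangled unit vectors, the subspace orthogonal to all $\ket{\psi_i}$ contains a maximally entangled state iff some $O\in\Ort(d)$ satisfies $\tr(O^TO_i)=0$ for all $i$, so the absence of such an $O$ is exactly (c). Reading these equivalences in both directions proves (i).

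For part (ii), fact (3) shows that $(U\otimes V)\ket{\psi_{O_i}}=(-1)^{a_i}\ket{\psi_{O'_{\pi(i)}}}$ for all $i$ is equivalent to $VO_iU^T=(-1)^{a_i}O'_{\pi(i)}$ for all $i$; as $U,V$ range over $\Ort(d)$ so do $V$ and $U^T$, so after renaming the two orthogonal matrices this is precisely the asserted condition $\tilde U O_i\tilde V=(-1)^{a_i}O'_{\pi(i)}$.

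There is no genuine obstacle here — the statement really is a dictionary translation, which is why the text calls it direct. The only points requiring care are bookkeeping: keeping the $1/d$ normalization and the transpose placements in facts (1)--(3) consistent (it is easy to write $MM^T$ where $M^TM$ is meant, or to drop a transpose in the ricochet identity, which would swap the roles of $U$ and $V$ in part (ii)); and, in the unextendibility clause, explicitly invoking bijectivity of $M\leftrightarrow\ket{\psi_M}$ so that ``no maximally entangled state in the complement'' is legitimately replaced by a condition quantified over $\Ort(d)$ rather than over all of $\bbR^d\otimes\bbR^d$.
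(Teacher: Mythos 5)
Your proposal is correct and matches the paper's intent exactly: the paper offers no written proof (it states the lemma ``can be proven directly'' from the correspondence $M\mapsto(I_d\otimes M)\ket{\Omega_d}$), and your three facts --- the inner-product formula $\braket{\psi_M}{\psi_N}=\tfrac1d\tr(M^TN)$, the characterization of maximally entangled states by $M\in\Ort(d)$, and the ricochet identity --- are precisely the dictionary that makes the translation work. Nothing further is needed.
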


		We now apply the previous results to obtain the following theorem.
		
		\begin{theorem}
			\label{ru}
			Let
			\begin{eqnarray}
				\label{fs}
				\notag
			&&\ket{\psi_i}=(I_3\otimes G_i)\ket{\Omega_3}, i=1,2,3,4,\\
			&&\ket{\psi_3'}=(I_3\otimes G_3')\ket{\Omega_3},
				\end{eqnarray}
			where $G_1,G_2,G_3,G_3',G_4\in \Ort(3)$ are defined in Corollary \ref{3onew} and Theorem \ref{unt}. 
			
			(i) The smallest number of a RUMEB in $\bbR^3\otimes \bbR^3$ is three.  Any three-number RUMEB in $\bbR^3\otimes \bbR^3$ is  orthogonally equivalent to $\{\ket{\psi_1},\ket{\psi_2},\ket{\psi_3'}\}$. 
			
			(ii) The largest number of a RUMEB in $\bbR^3\otimes \bbR^3$ is four.  Any four-number RUMEB in $\bbR^3\otimes \bbR^3$ is  orthogonally equivalent to $\{\ket{\psi_1},\ket{\psi_2},\ket{\psi_3},\ket{\psi_4}\}$.
		\end{theorem}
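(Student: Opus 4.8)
The plan is to push the whole statement through Lemma \ref{state}, turning it into a question about order-three $n$-OO matrix sets, and then simply read off the answer from the classification already established in Theorem \ref{2ge}, Corollary \ref{3onew} and Theorem \ref{unt}. By Lemma \ref{state}(i), an $n$-number RUMEB in $\bbR^3\otimes\bbR^3$ is exactly an order-three $n$-OO matrix set $\{O_1,\dots,O_n\}$ that is \emph{unextendible}, meaning no $O\in\Ort(3)$ satisfies $\tr(O^TO_i)=0$ for every $i$; here the diagonal normalization $\tr(O_i^TO_i)=3$ in condition (b) is automatic because each $O_i$ is orthogonal. By Lemma \ref{state}(ii), orthogonal equivalence of RUMEB corresponds bijectively to orthogonal equivalence of the associated $n$-OO matrix sets. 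So everything reduces to: for which $n$ does an unextendible order-three $n$-OO set exist, and what are they up to orthogonal equivalence?

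For part (i), I would first rule out $n=1,2$. A single $O_1\in\Ort(3)$ is always extendible, since $O_1\Omega_1\in\Ort(3)$ and $\tr\bigl(O_1^T(O_1\Omega_1)\bigr)=\tr(\Omega_1)=0$; hence no one-number RUMEB exists. By Theorem \ref{2ge}(i)--(ii), every order-three 2-OO set is orthogonally equivalent to $\{I_3,\Omega_1\}$, which extends to the 3-OO set $\mathcal C$, so every 2-OO set is extendible and no two-number RUMEB exists. Next, by Corollary \ref{3onew} every order-three 3-OO set is orthogonally equivalent to $\mathcal{G}_1$ or $\mathcal{G}_2$; by Theorem \ref{unt}(ii) the set $\mathcal{G}_2$ extends to the 4-OO set $\mathcal G$ and is therefore not a RUMEB, whereas by Theorem \ref{unt}(i) the set $\mathcal{G}_1$ is unextendible. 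Consequently the minimal size of a RUMEB in $\bbR^3\otimes\bbR^3$ is three, and any three-number RUMEB is orthogonally equivalent to the one built from $\mathcal{G}_1=\{G_1,G_2,G_3'\}$, that is, $\{\ket{\psi_1},\ket{\psi_2},\ket{\psi_3'}\}$.

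For part (ii), Theorem \ref{unt}(iii) says there is no order-three $n$-OO set for any $n\ge5$, so no RUMEB of size $\ge5$ exists; in particular the 4-OO set $\mathcal G=\{G_1,G_2,G_3,G_4\}$ of Theorem \ref{unt}(ii) cannot be enlarged and is thus a RUMEB. Hence the maximal size is four, and since Theorem \ref{unt}(ii) shows $\mathcal G$ is the unique order-three 4-OO set up to orthogonal equivalence, every four-number RUMEB is orthogonally equivalent to $\{\ket{\psi_1},\ket{\psi_2},\ket{\psi_3},\ket{\psi_4}\}$. There is no deep obstacle here — all the structural work lives in Corollary \ref{3onew} and Theorem \ref{unt} — so the only thing to be careful about is the bookkeeping: verifying that "unextendible as a RUMEB" is literally condition (c) of Lemma \ref{state}, that the Hilbert--Schmidt normalization is free for orthogonal matrices, and that the correspondence $O_i\mapsto\ket{\psi_i}=(I_3\otimes O_i)\ket{\Omega_3}$ transports orthogonal equivalence in both directions, so that the matrix-side classification transfers verbatim. (One may also note $\mathcal{G}_1\not\sim\mathcal{G}_2$, following from Theorem \ref{2ge}(ii) and Corollary \ref{3onew}, though this is not needed since $\mathcal{G}_2$ is already excluded by being extendible.)
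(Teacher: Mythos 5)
Your proposal is correct and follows essentially the same route as the paper: reduce the statement via Lemma \ref{state} to the classification of unextendible order-three $n$-OO matrix sets, and then read off the answer from Theorem \ref{2ge}, Corollary \ref{3onew} and Theorem \ref{unt}. Your explicit exclusion of $n=1$ (via $\tr(\Omega_1)=0$) is a small piece of bookkeeping the paper leaves implicit, but otherwise the two arguments coincide.
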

		
		\begin{proof}
			(i) We have shown in Theorem \ref{2ge} that any 2-OO matrix set is orthogonally equivalent to $\{I_3, \O_1\}$, which is extendible.  Thus any 2-OO matrix set is also extendible. By lemma \ref{state} (i), the number of a RUMEB in $\bbR^3\otimes \bbR^3$ is greater than two.
				Further, recall from Corollary \ref{3onew} that any 3-OO matrix set is orthogonally equivalent to $\mathcal{G}_1$ in (\ref{c1'}) or $\mathcal{G}_2$ in (\ref{c2'}). Theorem \ref{unt} implies that only $\mathcal{G}_1$ is unextendible. Hence by Lemma \ref{state} (ii),  we proved the assertion.
			
			(ii) This is directly proved by Theorem \ref{unt} (ii),(iii) and Lemma \ref{state} (ii).
		\end{proof}
We further obtain an interesting property of the four-number RUMEB in a two-qutrit system from the above theorem.
\begin{corollary}
There exists another maximally entangled state in the span of a four-number RUMEB within a two-qutrit system.
\end{corollary}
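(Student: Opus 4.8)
The plan is to first pass to the canonical case and then write the desired state down by hand. By Theorem~\ref{ru}(ii) any four-number RUMEB in $\bbR^3\otimes\bbR^3$ is orthogonally equivalent to $\{\ket{\psi_1},\ket{\psi_2},\ket{\psi_3},\ket{\psi_4}\}$, and a local transformation $U\otimes V$ with $U,V\in\Ort(3)$ is a linear bijection that carries maximally entangled states to maximally entangled states, hence maps the span of the basis onto the span of the equivalent basis. So it suffices to exhibit one maximally entangled state, distinct from all $\ket{\psi_i}$, inside $\lin\{\ket{\psi_1},\ket{\psi_2},\ket{\psi_3},\ket{\psi_4}\}$, where $\ket{\psi_i}=(I_3\otimes G_i)\ket{\Omega_3}$ and $G_1,G_2,G_3,G_4\in\Ort(3)$ are the matrices of Corollary~\ref{3onew} and Theorem~\ref{unt}.

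The candidate I would take is the balanced superposition $\ket{\phi}:=\tfrac12(\ket{\psi_1}+\ket{\psi_2}+\ket{\psi_3}+\ket{\psi_4})$. By linearity of $M\mapsto(I_3\otimes M)\ket{\Omega_3}$ we have $\ket{\phi}=(I_3\otimes S)\ket{\Omega_3}$ with $S:=\tfrac12(G_1+G_2+G_3+G_4)$. Since $\{\ket{\psi_i}\}$ is orthonormal, $\ket{\phi}$ is a unit vector in the span with $\braket{\psi_i}{\phi}=\tfrac12$ for every $i$, so it is not a scalar multiple of any $\ket{\psi_i}$ — it is a genuinely new state. A bipartite state $(I_3\otimes M)\ket{\Omega_3}$ is a normalized maximally entangled state precisely when $M\in\Ort(3)$, so the whole claim reduces to checking that
\begin{eqnarray}
	S=\frac{1}{2}\bma \frac{1}{2} & \frac{-5-\sqrt{5}}{4} & \frac{5-\sqrt{5}}{4} \\ \frac{-5+\sqrt{5}}{4} & \frac{1}{2} & \frac{5+\sqrt{5}}{4} \\ \frac{5+\sqrt{5}}{4} & \frac{5-\sqrt{5}}{4} & \frac{1}{2} \ema
\end{eqnarray}
lies in $\Ort(3)$, i.e.\ that the three columns of $G_1+G_2+G_3+G_4$ are pairwise orthogonal and each of Euclidean norm $2$.

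That orthonormality check is the only computation, and it is routine: the cross terms carrying $\sqrt{5}$ cancel and one gets $S^TS=I_3$. (A partial sanity check comes for free: $\tr\bigl((G_1+G_2+G_3+G_4)^T(G_1+G_2+G_3+G_4)\bigr)=\sum_{i,j=1}^4\tr(G_i^TG_j)=12$ because $\{G_1,G_2,G_3,G_4\}$ is a $4$-OO set, matching $\tr\bigl((2S)^T(2S)\bigr)=12$ for orthogonal $S$; but this only fixes the sum of the three squared singular values, so one still has to verify they are equal.) Granting $S\in\Ort(3)$, $\ket{\phi}$ is maximally entangled and the corollary follows. The only real obstacle is spotting that the equal-weight combination works; equivalently one could attack the equal-eigenvalue system $3\beta=\alpha^2$, $27\gamma=\alpha^3$ with $\alpha,\beta,\gamma$ the polynomials in $l_1,\dots,l_4$ from Theorem~\ref{condition}(iii) and select a real solution not proportional to a coordinate vector, but the symmetric choice $l_1=l_2=l_3=l_4$ short-circuits that search. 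Finally, this is consistent with the nonexistence of order-three $5$-OO matrix sets (Theorem~\ref{unt}(iii)): the five matrices $G_1,G_2,G_3,G_4,2S$ are not mutually orthogonal, since $\tr\bigl(G_i^T(2S)\bigr)=\tr(G_i^TG_i)=3\neq0$.
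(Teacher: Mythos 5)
Your proposal is correct and follows essentially the same route as the paper: reduce to the canonical RUMEB via Theorem \ref{ru}(ii), take the equal-weight superposition $\tfrac12\sum_{i=1}^4\ket{\psi_i}$, and verify that $\tfrac12(G_1+G_2+G_3+G_4)$ is the orthogonal matrix the paper calls $G_5$. The extra remarks (the overlap $\tfrac12$ showing the state is new, the trace sanity check, the consistency with the nonexistence of $5$-OO sets) are sound but not needed.
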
		
\begin{proof}
By theorem \ref{ru} (ii), we only need to prove that the claim holds for the four-number RUMEB $\{\ket{\psi_1},\ket{\psi_2},\ket{\psi_3},\ket{\psi_4}\}$ in (\ref{fs}). 
Let $\ket{\psi_5}:=\frac{1}{2}\sum_{i=1}^4\ket{\psi_i}=
(I_3\otimes G_5)\ket{\Omega_3}$ where $
G_5=\frac{1}{2}\sum_{i=1}^4G_i$.
A calculation shows that 
$
G_5=\bma\frac{1}{4}&\frac{-5-\sqrt{5}}{8}&\frac{5-\sqrt{5}}{8}\\\frac{-5+\sqrt{5}}{8}&\frac{1}{4}&\frac{5+\sqrt{5}}{8}\\\frac{5+\sqrt{5}}{8}&\frac{5-\sqrt{5}}{8}&\frac{1}{4}\ema \in \Ort(3).
$
This implies that $\ket{\psi_5}$ is also a maximally entangled state.
\end{proof}		
		\subsection{Pure bipartite state prepared by maximally entangled states}
		\label{app2}
		\begin{lemma}
			\label{le:bipartite=MESs}
			Any pure state $\ket{\psi}\in \bbC^d\otimes \bbC^d$ can be written as the superposition of at most $d$ maximally entangled states which are mutually orthogonal. 
		\end{lemma}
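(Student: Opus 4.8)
The plan is to translate the statement into matrix language via the standard Choi-type correspondence and then invoke Theorem \ref{le:nxnUnitary}(i).

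First I would recall the dictionary already used in Lemma \ref{state}: every pure state $\ket{\psi}\in\bbC^d\otimes\bbC^d$ can be written uniquely as $\ket{\psi}=(I_d\otimes M)\ket{\Omega_d}$ for some order-$d$ complex matrix $M$, where $\ket{\Omega_d}=\frac{1}{\sqrt{d}}\sum_{j=0}^{d-1}\ket{jj}$. Two elementary facts accompany this correspondence: (a) the state $(I_d\otimes U)\ket{\Omega_d}$ is maximally entangled exactly when $U$ is a scalar multiple of a unitary matrix, since the reduced density operator on the second system is $\frac{1}{d}UU^\dagger$; and (b) for any matrices $U,V$ one has $\langle (I_d\otimes U)\Omega_d | (I_d\otimes V)\Omega_d\rangle=\frac{1}{d}\tr(U^\dagger V)$, so the two associated states are orthogonal iff $\tr(U^\dagger V)=0$. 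I would record these as a one-line preliminary observation.

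Next I would apply Theorem \ref{le:nxnUnitary}(i) to the matrix $M$: there exist complex numbers $k_1,\dots,k_d$ and a $d$-OU matrix set $\{U_1,\dots,U_d\}$ with $M=\sum_{i=1}^{d}k_iU_i$. Substituting into $\ket{\psi}=(I_d\otimes M)\ket{\Omega_d}$ and using linearity in the second tensor factor gives $\ket{\psi}=\sum_{i=1}^{d}k_i\ket{\phi_i}$, where $\ket{\phi_i}:=(I_d\otimes U_i)\ket{\Omega_d}$. By fact (a) each $\ket{\phi_i}$ is maximally entangled, and by fact (b) together with the defining property $\tr(U_i^\dagger U_j)=0$ for $i\neq j$ of a $d$-OU set, the states $\ket{\phi_1},\dots,\ket{\phi_d}$ are mutually orthogonal. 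Discarding the terms with $k_i=0$ expresses $\ket{\psi}$ as a superposition of at most $d$ mutually orthogonal maximally entangled states, which is the assertion.

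There is essentially no serious obstacle: the whole content is carried by Theorem \ref{le:nxnUnitary}(i), and the remaining work is the routine verification of (a) and (b). The only point needing a little care is the normalization convention — reading ``superposition'' as allowing arbitrary complex coefficients, so that $\ket{\psi}$ itself need not be normalized and no positivity constraint on the $k_i$ is imposed — which is consistent with Definition \ref{defd} and with the statement of the lemma.
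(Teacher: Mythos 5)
Your argument is correct and follows exactly the paper's route: write $\ket{\psi}=(I_d\otimes M)\ket{\Omega_d}$, apply Theorem \ref{le:nxnUnitary}(i) to decompose $M$ over a $d$-OU matrix set, and translate orthogonality and maximal entanglement back through the correspondence. The paper's proof is a terser version of the same argument; your explicit verification of facts (a) and (b) is a harmless (and welcome) addition.
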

		
		\begin{proof}
			We write $\ket{\psi}=(I\otimes M)\ket{\Omega_d}$ where $M$ is an order-$d$ complex matrix. Using Lemma \ref{le:nxnUnitary}, there exists an $d$-OU matrix set $\{U_k\}_{k=0}^{d-1}$ such that $M=\sum_{k=0}^{d-1} c_kU_k$. Thus $\ket{\psi}=\sum_{k=0}^{d-1} c_k\ket{\Phi_k}$ where $\ket{\Phi_k}:=(I\otimes U_k)\ket{\Omega_d}$ are mutually-orthogonal maximally entangled states.
		\end{proof}
		
		The result has the following application in entanglement theory. Suppose we are given only maximally entangled states and we are allowed to perform local  unitary gates on one system only. The target is to create an arbitrary bipartite pure state $\ket{\ps}\in\bbC^d\otimes\bbC^d$. 
		Using Lemma \ref{le:bipartite=MESs}, one can show that $\ket{\ps}$ can be produced by the superposition of at most $d$ maximally entangled states under the above-mentioned environment. 
		
\section{conclusions}
		\label{sec:pro}
We have introduced and characterized the sets of $n$-OU and $n$-OO matrices. We showed that $n\le4$ when the real matrices are order-three, and applied the fact to characterize RUMEB in two-qutrit system. An open problem arising from this paper is to find the maximum number $n$ such that an order-$d$ $n$-OO matrix set exists for $d\ge 5$. Further, we proved that any order-$d$ matrix has a $d$-OU decomposition, but there exist real matrices that do not have any $n$-OO decomposition. This leads to another interesting question: does any order $d=4k$ real matrix have a $d$-OO decomposition, where $k$ is a positive integer? We know this holds for some $d$ based on the the existence of real Hadamard matrices so far \cite{horadam2012hadamard}. However, it has not been proved that the real Hadamard matrix exists for any $d=4k$. Hence this question also remains open.

\section*{ACKNOWLEDGMENTS}
Authors were supported by the NNSF of China (Grant No. 11871089).		
		\bibliographystyle{unsrt}
		
		\bibliography{OO}
		
	\end{document}